\newtheorem{lemma}{Lemma}
\newtheorem{definition}{Definition}
\newcommand{\argmax}{\operatornamewithlimits{argmax}}
\begin{document}

\title[Faster Average Case Low Memory Semi-External Con\-struc\-tion of the BWT]{Faster Average Case Low Memory Semi-External\\Construction of the Burrows-Wheeler Transform}

\author{German Tischler}

\address{The Wellcome Trust Sanger Institute\\
Wellcome Trust Genome Campus\\
Hinxton, Cambridge\\
CB10 1SA\\
United Kingdom
}

\email{german.tischler@sanger.ac.uk}

\thanks{%
Supported by the Wellcome Trust.\\
Full version of an extended abstract which appeared
in the Proceedings of the 2\textsuperscript{nd} {I}nternational {C}onference on {A}lgorithms for {B}ig {D}ata.
}

\begin{abstract}
The Burrows Wheeler transform has applications in data compression as well as full text indexing. 
Despite its important applications and various existing algorithmic approaches the construction 
of the transform for large data sets is still challenging. In this paper we present a new semi external
memory algorithm for constructing the Burrows Wheeler transform. 
It is capable of constructing the transform for an input text of length $n$
over a finite alphabet in time $O(n\log^2\log n)$ on average, if sufficient internal memory is available to hold a
fixed fraction of the input text. In the worst case the run-time is $O(n\log n \log\log n)$.
The amount of space used by the algorithm in external memory is $O(n)$ bits.
Based on the serial version we also present a shared memory parallel algorithm running in time
$O(\frac{n}{p}\max\{\log^2\log n+\log p\})$ on average when $p$ processors are available.
\end{abstract}

\maketitle

\section{Introduction}
The Burrows-Wheeler transform (BWT) was introduced to facilitate the lossless
compression of data (cf.~\cite{burrows1994block}). It has an intrinsic
connection to some data structures used for full text indexing like the
suffix array (cf.~\cite{manber1993suffix}) and is at the heart of some compressed
full text self indexes like the FM index (see \cite{ferragina2000opportunistic}).
The FM index requires no more space than the k'th order entropy compressed
input text plus some asymptotically negligible supporting data structures.
Many construction algorithms for the BWT are based on
its relation to the suffix array, which can be computed from the input text
in time linear in the length of that text (see e.g.~\cite{karkkainen2003simple,nong2011two}).
While these algorithms run in linear time and are thus theoretically optimal
they require $O(n\log n)$\footnote{by $\log$ we mean $\log_2$ in this paper} bits of space for the uncompressed suffix array
given an input text of length $n$ while the text itself can be stored in a space
of $n\lceil\log\sigma\rceil$ bits for an alphabet of size $\sigma$ where we often have $\sigma\ll n$ and in
most applications $\sigma$ is constant. Algorithms for computing the suffix
array in external memory have been proposed (see e.g.~\cite{DBLP:conf/alenex/BingmannFO13,DBLP:journals/jea/DementievKMS08})
but these algorithms require large amounts of space and input/output in external memory.
An asymptotically optimal internal memory solution concerning time and space has been proposed \cite{DBLP:conf/focs/HonSS03}.
However the space usage of this algorithm is $O(n)$ bits for constant
alphabets, where an inspection of the algorithm suggests that the actual
practical memory usage of the algorithm is several times the size of the text in bits.
The practical space usage of the algorithm subsequently presented in \cite{DBLP:conf/spire/OkanoharaS09} is lower
(i.e. the involved constants are smaller) while theoretically not linear. It
however still requires multiple times as much space as the input text. A
sample implementation given by the authors only works for input sizes of up to
$2^{32}$ (see \cite{DBLP:conf/spire/BellerZGO13}) and only handles a single
level of the recursive algorithm. Given the implementation complexity of the algorithm it
remains unclear if it would scale well. 
Crochemore et al present an algorithm computing the BWT in quadratic
time with constant additional space (see \cite{DBLP:conf/cpm/CrochemoreGKL13}).
In \cite{DBLP:conf/spire/BellerZGO13}
Beller et al propose a semi external algorithm for the construction of the BWT based on induced
sorting. An algorithm is called semi external if it uses non negligible
amounts of internal as well as external memory. 
According to the authors the algorithm scales to arbitrary input sizes and
uses about one byte (i.e.~$8$ bits) per input symbol in internal memory. 
An algorithm constructing the BWT of a text by
block-wise merging using a finite amount of internal memory is presented in
\cite{DBLP:journals/algorithmica/FerraginaGM12}. The algorithm partitions
the text into a set of fixed size blocks. The run-time is $O(n^2/b)$
for a block size of $b$ and a text length of $n$.
It requires an amount of internal memory which is roughly sufficient to compute 
the suffix array of a single of these block.
In particular the amount of internal memory used can be smaller than the
space required for the text. In this paper we modify this algorithm to run
in time $O(n \log n \log\log n)$ in the worst case and $O(n\log^2\log n)$
on average for the case where we are able to keep a fixed fraction of the text in memory.
Assuming the practically common case of a finite alphabet the algorithm in
\cite{DBLP:journals/algorithmica/FerraginaGM12} uses blocks of size $O(n/\log n)$ when
provided with $O(n)$ bits of space in internal memory so its total run-time
for this setting is $O(n\log n)$. In consequence our algorithm is faster
on average and slower by $O(\log\log n)$ for a very unlikely worst case.
Compared to the algorithm presented in \cite{DBLP:conf/spire/BellerZGO13} our 
algorithm can work with less internal memory. 
For DNA for instance the complete text can be held in memory using about $2$
bits per symbol which is significantly less than a full byte (8 bits) per character.
We also propose a parallel version of our algorithm for the shared memory
model. This algorithm runs in time $O(\frac{n}{p}\max\{\log^2 n,\log p\})$ on
average and with high probability when using $p$ processors.

The rest of this paper is structured as follows. In Section \ref{sect:definitions} we introduce
definitions which we will be using throughout the paper. Section \ref{bwtmergingsection}
gives a short overview over the algorithm of Ferragina et al (cf.~\cite{DBLP:journals/algorithmica/FerraginaGM12})
on which our new algorithm is based. In Section \ref{computingperiods} we
present an algorithm for computing the minimal periods of all prefixes of a
string in succinct space. We show how to compute the suffix order for
blocks of $b$ suffixes drawn from a circular version of a string of length $n\geq b$ in optimal time $O(b)$ and space $O(b\log b)$ bits
after collecting some repetition information about the string in a preprocessing phase
taking time $O(n)$ and $O(b\log b)$ bits of space in Section \ref{singleblocksorting}.
In Section \ref{mergepairsect} we show how to merge the BWTs of two adjacent text blocks of
length $b_l$ and $b_r$ respectively given by their BWT in external memory
in time $O\left((b_l+b_r)\log\log(b_l+b_r)\right)$ and sufficient space in
internal memory to hold a rank dictionary for the left block. We describe
in Section \ref{bwtbalanced} how to combine the block sorting from Section
\ref{singleblocksorting} and block merging from Section \ref{mergepairsect}
to obtain a BWT construction algorithm based on a balanced instead of a skewed merge tree.
In Section \ref{sect:parallelisation} we describe a parallel version of our algorithm.
Finally in section \ref{conclusion} we summarise our results and discuss
some remaining open problems.
\section{Definitions}
\label{sect:definitions}
For a string $s=s_0s_1s_2\ldots s_{m-1}$ of length $|s|=m\in\mathbb{N}$ we define $s[i]=s_i$ for $0\leq i < m$ and for
$s=s_0s_1\ldots $ we define $s[i]=s_i$ for $0\leq i$.
For a finite word $u$ and a finite or infinite word $v$ we write their concatenation as $uv$.
For any finite words $u,x$ and finite or infinite words $w,v$ such that $w=uxv$ we call $u$ a
prefix, $v$ a suffix and $x$ a factor of $w$.
A prefix, suffix or factor of some string $w$ is called proper if it is not
identical with $w$.
The empty word consisting of no symbols is denoted by $\epsilon$.
For a string $s$ and indices $0\leq i \leq j < |s|$ we denote the factor $s[i]s[i+1]\ldots s[j]$ by
$s[i,j]$. For any $i,j$ such that $i>j$ the term $s[i,j]$ denotes the empty word.
A finite word $w$ has period $p$ iff $w[i]=w[i+p]$ for $i=0,\ldots,|w|-p-1$
and an infinite word $w$ has period $p$ iff $w[i]=w[i+p]$ for $i=0,1,\ldots$.
For a finite word $u$ and $k\in\mathbb{N}$ the $k$'th power
$u^k$ of $u$ is defined by $u^0=\epsilon$  and $u^{i+1}=u^i u$ for $i\in\mathbb{N}$.
A word $w$ is primitive if it is not a power of a word $u$ such that $|u|<|w|$.
A word $u$ is a root of $w$ if $w=u^k$ for some $k\in\mathbb{N}$.
A word $w$ is a square if there is a word $u$ such that $w=u^2$.
Throughout this paper let $\Sigma=\{0,1,\ldots,\sigma-1\}$ denote a finite alphabet for some $\sigma\in\mathbb{N},\sigma>0$
and let $t=t_0t_1\ldots t_{n-1}\in\Sigma^n$ denote a finite string of length $n>0$.
We define the semi infinite string $\tilde{t}$ by $\tilde{t}[i] = t[i - \lfloor i/n\rfloor n]$ for $i \geq 0$.
We define the suffix $\tilde{t}_i$ of $\tilde{t}$ as $\tilde{t}_i=\tilde{t}[i]\tilde{t}[i+1]\ldots$
and $\tilde{t}_i < \tilde{t}_j$ for $i,j\in\mathbb{N},i\ne j$ iff
either $\tilde{t}_i = \tilde{t}_j$ and $i < j$ or for the smallest $\ell\geq 0$ 
such that $\tilde{t}[i+\ell]\ne \tilde{t}[j+\ell]$ we have $\tilde{t}[i+\ell] <
\tilde{t}[j+\ell]$. The suffix array $A$ of $t$ is defined as the
permutation of the numbers $0, 1, \ldots, n-1$ such that 
$\tilde{t}_{A[i]} < \tilde{t}_{A[i+1]}$ for $i=0,1,\ldots,n-2$ and the
Burrows-Wheeler transform (BWT) $B=b_0b_1\ldots b_{n-1}$ of $t$ is given by $b_i = \tilde{t}[A[i]+n-1]$
for $i=0,1,\ldots,n-1$.
\section{BWT construction by block-wise merging}
\label{bwtmergingsection}
We give a short high level description of the algorithm by Ferragina et al.~in
\cite{DBLP:journals/algorithmica/FerraginaGM12} as we will be modifying it.
Unlike our algorithm it assumes the input string to have a unique minimal terminator symbol.
Given a block size $b$ the input string $t$ is partitioned
into $c=\lceil n/b \rceil$ blocks $T_0,T_1,\ldots,T_{c-1}$ of roughly equal size. 
The algorithm starts by suffix sorting the last block, computing its BWT $B_{c-1}$ and the bit array $gt_{c-1}$
which denotes for each suffix in $T_{c-1}$ but the first whether it is
smaller or larger than the first. 
The BWT of $T_i\ldots T_{c-1}$ for $i<c-1$ is computed
by first computing the suffix array for the suffixes starting in $T_i$ by
using the text of $T_i$ and $T_{i+1}$ in memory and handling the comparison
of suffixes starting in $T_i$ but equal until both have entered $T_{i+1}$
by using the bit vector $gt_{i+1}$ which explicitly stores the result of
this comparison.
The BWTs of $T_i$ and $T_{i+1}T_{i+2}\ldots T_{c-1}$
are merged by computing the ranks of the suffixes starting in $T_{i+1}T_{i+2}\ldots T_{c-1}$
in the sorted set of suffixes of $T_i$ and computing a gap array $G_i$ which
denotes how many suffixes from $T_{i+1}T_{i+2}\ldots T_{c-1}$ are to be
placed before the suffixes in $T_i$, between two adjacent suffixes in $T_i$
and after all suffixes in $T_i$.
This process follows a backward search of $T_{i+1}T_{i+2}\ldots T_{c-1}$ in $T_i$. 
Using the array $G_i$ it is simple to merge the two BWTs together.
For computing the rank of a suffix from $T_{i+1}\ldots T_{c-1}$ it is necessary to
know whether it is smaller or larger than the one at the start of
$T_{i+1}T_{i+2}\ldots T_{c-1}$ as $B_i$ is not a conventional BWT.
For further details about the algorithm the reader is referred to \cite{DBLP:journals/algorithmica/FerraginaGM12}.
\section{Computing the Minimal Period of the Prefixes of a String}
\label{computingperiods}
A border of a string $w$ is a string $u$ such that $w=uv=xu$ and $|v|=|x|\ne 0$.
The border array $\mathcal{B}$ of a string $w$ is the integer array of length $|w|$ such that
$\mathcal{B}[i]$ stores the length of the longest border of $w[0,i]$ for $i=0,1,\ldots,|w|-1$.
When a string $w$ has the border array $\mathcal{B}$, then the minimal
period of $w[0,i]$ for $0\leq i < |w|$ can be obtained as $i+1-\mathcal{B}[i]$
(see e.g. Proposition 1.5 in \cite{crochemore2007algorithms}).
The uncompressed border array for a string of length $b$ can be stored using $b$ words, i.e.~$b\lceil\log b\rceil$ bits.
Due to the properties of the border array (the value at index $i+1$ can be at
most $1$ larger than the value at index $i$ and all values are non-negative) it
is however easy to use a succinct version in $O(b)$ bits. 
For the rest of the section let $w$ denote a string and let $\mathcal{B}$ denote the border array of $w$.
Then we can represent $\mathcal{B}$ as the bit vector
$\tilde{\mathcal{B}}=1 0^{\mathcal{B}[0]-\mathcal{B}[1]+1} 1 0^{\mathcal{B}[1]-\mathcal{B}[2]+1} 1 \ldots 0^{\mathcal{B}[b-1]-\mathcal{B}[b-2]+1} 1$
and $\mathcal{B}[i]$ can be retrieved as $i-\textnormal{rank}_0(\tilde{\mathcal{B}},\textnormal{select}_1(\tilde{\mathcal{B}},i))$
for $i=0,1,\ldots,b-1$ where $\textnormal{rank}_0(\tilde{\mathcal{B}},i)$
denotes the number of $0$ bits in $\tilde{\mathcal{B}}$ up to and including
index $i$ and $\textnormal{select}_1(\tilde{\mathcal{B}},i)$ returns the
index of the $i+1$'th $1$ bit in $\tilde{\mathcal{B}}$. 
Indexes for the rank and select operations can be stored in $o(b)$ extra bits (see e.g.~\cite{navarro2007compressed}).
The most common indexes for these operations are tailored for static bit
vectors while the computation of the border array (see for instance section 1.6 in \cite{crochemore2007algorithms})
incrementally extends the array while using the already existing prefix to deduce the next value.
The rank and select indexes as presented in section 6.1 in
\cite{navarro2007compressed} are however easily adapted for the case of only appending bits at the end of the bit vector.
Both types of indexes partition the indexed bit vector into blocks. In the
case of rank the blocks consist of $\beta_0 = \lceil{\log^2 |\tilde{\mathcal{B}}|}\rceil$ bits of $\tilde{\mathcal{B}}$.
An array $\mathcal{R}_0$ is used to explicitly store $\textnormal{rank}_0(\tilde{\mathcal{B}},\beta_0 i)$ for $i=0,1,\ldots,\lceil\frac{n}{\beta_0}\rceil-1$
where each element of $\mathcal{R}_0$ takes $\lceil\log |\tilde{\mathcal{B}}| \rceil$ bits.
The blocks are again partitioned into smaller blocks of length $\beta_1=\lceil\frac{\log |\tilde{\mathcal{B}}|}{2}\rceil$
(for the sake of simplicity in the discussion we assume $\beta_1$ divides
$\beta_0$. The method is easily extended to the case where $\beta_0$ is not exactly a multiple of $\beta_1$).
Another integer array $\mathcal{R}_1$ is used to store
$\textnormal{rank}_0(\tilde{\mathcal{B}},\beta_1 i) - \textnormal{rank}_0(\tilde{\mathcal{B}},\lfloor\frac{\beta_1 i}{\beta_0}\rfloor)$
for $i=0,1,\ldots,\lceil\frac{n}{\beta_1}\rceil - 1$. Each element of $\mathcal{R}_1$ takes $\lceil \log \beta_0 \rceil$ bits.
We can incrementally build the rank dictionary by adding one value to $\mathcal{R}_0$ and
$\lceil\frac{\beta_0}{\beta_1}\rceil$ values to $\mathcal{R}_1$
each time $\beta_0$ bits have been appended to the underlying bit vector.
This can trivially be done in time $O(\beta_0)$ for each block of size $\beta_0$.
The answers for a block yet to be completed can be stored explicitly in an
array of size $\beta_0$ which takes $\beta_0 \lceil\log\beta_0\rceil = O(\log^2
\tilde{\mathcal{B}} \log \tilde{\mathcal{B}} ) $ bits and thus is
asymptotically negligible compared to the space used for the bit vector itself.
Using this method the rank operation can be computed in time $O(1)$ at any
time and the time for adding each single bit to the vector is amortised constant.
The method for keeping constant time select operations is conceptually very
similar. The data structure partitions the bit vector $\tilde{\mathcal{B}}$
such that each block contains $\zeta_0=\lceil\log^2 |\tilde{\mathcal{B}}|\rceil$
bits of value $1$. Again we can keep enough space to store the answers for a
single incomplete block explicitly and extend the index for select each time
the total number of $1$ bits in the intermediate bit vector reaches an
integer multiple of $\zeta_0$.
\begin{lemma}
Let $x$ denote a string over a finite alphabet $\Gamma=\{0,1,\ldots,\gamma-1\}$.
The border array of $x$ can be computed in time $O(|x|)$ and using space
$O(|x|\log\gamma)$ bits.
\end{lemma}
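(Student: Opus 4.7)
The plan is to run the classical Morris--Pratt incremental construction of $\mathcal{B}$, but to store $\mathcal{B}$ itself in the succinct bit-vector form $\tilde{\mathcal{B}}$ just described rather than as an explicit integer array. Having already produced $\mathcal{B}[0],\ldots,\mathcal{B}[i-1]$ as a prefix of $\tilde{\mathcal{B}}$, I compute $\mathcal{B}[i]$ by walking the usual border chain $\ell_0=\mathcal{B}[i-1]$, $\ell_{k+1}=\mathcal{B}[\ell_k-1]$ until either $x[\ell_k]=x[i]$ (in which case $\mathcal{B}[i]=\ell_k+1$) or $\ell_k=0$ (in which case $\mathcal{B}[i]\in\{0,1\}$ depending on whether $x[0]=x[i]$). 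Each chain step queries one entry of $\mathcal{B}$, which via $\mathcal{B}[j]=j-\textnormal{rank}_0(\tilde{\mathcal{B}},\textnormal{select}_1(\tilde{\mathcal{B}},j))$ costs $O(1)$ time using the dictionaries from the preceding paragraphs. Once $\mathcal{B}[i]$ is known, I extend $\tilde{\mathcal{B}}$ by $\mathcal{B}[i-1]-\mathcal{B}[i]+1$ zeros followed by a $1$; the quantity is non-negative because $\mathcal{B}[i]\leq \mathcal{B}[i-1]+1$, which is the same inequality that justifies the succinct encoding in the first place.

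For the running time, the classical potential argument for KMP applies verbatim: the border-length candidate $\ell_k$ strictly decreases at every chain link that is followed, while $\mathcal{B}[i]\leq \mathcal{B}[i-1]+1$ means that the cumulative number of times the candidate can have been incremented by the end of iteration $|x|-1$ is at most $|x|$, so the total number of border lookups over all iterations is $O(|x|)$. Combined with the $O(1)$ cost of each rank and select query and the amortised $O(1)$ cost of appending a single bit (with the $\mathcal{R}_0,\mathcal{R}_1$ summaries refreshed only when a block of $\beta_0=\Theta(\log^2|x|)$ bits has been completed, and a $\Theta(\beta_0)$-word scratch area holding the answers for the current incomplete block), the algorithm runs in $O(|x|)$ total time.

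For space, the length of $\tilde{\mathcal{B}}$ after all appends telescopes to $|x|+\sum_{i=1}^{|x|-1}(\mathcal{B}[i-1]-\mathcal{B}[i]+1)=O(|x|)$ bits, the incremental rank/select indexes add $o(|x|)$ bits, and the input string contributes $|x|\lceil\log\gamma\rceil$ bits, giving a total of $O(|x|\log\gamma)$. The step I expect to require the most care is confirming that the $O(1)$ guarantee for rank and select is preserved under the append-only extension---in particular that the periodic refreshing of the summaries can be carried out without ever forcing a non-constant delay in the middle of a border-chain walk, since otherwise the KMP amortisation could be disturbed. The mechanism for this was already sketched above, so the lemma reduces to the clean composition of the two ingredients.
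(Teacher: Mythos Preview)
Your proposal is correct and follows exactly the approach the paper intends: the lemma is stated without a separate proof because the preceding paragraphs already lay out the argument---store $\mathcal{B}$ in the succinct bit-vector form $\tilde{\mathcal{B}}$, maintain append-only rank/select dictionaries with amortised $O(1)$ update and $O(1)$ query, and run the standard Morris--Pratt border-chain construction on top of this. Your write-up simply makes explicit the KMP amortisation and the telescoping length of $\tilde{\mathcal{B}}$ that the paper leaves to the reader.
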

\begin{lemma}
\label{periodenumlemma}
Let $x$ denote a string over a finite alphabet $\Gamma=\{0,1,\ldots,\gamma-1\}$.
The sequence of minimal periods of the prefixes of $x$ can be enumerated in
time $O(|x|)$ and space $O(|x|\log\gamma)$ bits.
\end{lemma}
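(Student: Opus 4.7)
The plan is to apply the on-line border-array construction used in the preceding lemma and, immediately after each $\mathcal{B}[i]$ is produced, emit the integer $i+1-\mathcal{B}[i]$; by Proposition~1.5 of \cite{crochemore2007algorithms} this value is exactly the minimal period of $x[0,i]$, so the whole enumeration reduces to making the border-array construction operate on-line and stream out one extra number per step.

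First I would invoke the classical failure-function procedure (Section~1.6 of \cite{crochemore2007algorithms}), which processes indices $i = 0, 1, \ldots, |x|-1$ in order and references $\mathcal{B}$ only at earlier positions. To respect the space budget, I would represent the partially built border array in the succinct form $\tilde{\mathcal{B}}$ discussed just before the lemma, extending it at step $i$ by appending $\mathcal{B}[i-1]-\mathcal{B}[i]+1$ zeros followed by a single one. Each $\mathcal{B}[j]$ that the procedure reads is then answered in $O(1)$ through the append-only rank and select indexes, which are maintained incrementally as sketched in the preceding paragraphs (with an explicit buffer for the currently-incomplete block so that queries that land there are still answered in $O(1)$).

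For the running-time bookkeeping I would combine two amortised arguments. The standard potential bound on the failure function gives $O(|x|)$ total work provided that each $\mathcal{B}$-lookup costs $O(1)$, while the incremental rank/select maintenance costs amortised $O(1)$ per bit appended to $\tilde{\mathcal{B}}$. Since the total number of appended bits is at most $2|x|-1$ --- exactly $|x|$ ones, and the zeros telescope to at most $|x|-1$ --- the two bounds compose to $O(|x|)$ overall, and the emission step itself is a single subtraction.

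The space accounting is $|x|\lceil\log\gamma\rceil$ bits for $x$, $O(|x|)$ bits for $\tilde{\mathcal{B}}$ together with the explicit buffer for the currently incomplete rank/select block, and $o(|x|)$ bits for the two indexes themselves, totalling $O(|x|\log\gamma)$ as claimed. The only point requiring care is that the two amortised analyses remain valid when interleaved, which is immediate because they charge against disjoint resources: the index $i$ for the failure-function loop, and the newly appended bits for the rank/select indexes.
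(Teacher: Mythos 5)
Your proposal is correct and follows essentially the same route the paper takes: the paper's "proof" of this lemma is precisely the preceding discussion of the succinct border array $\tilde{\mathcal{B}}$ with append-only rank/select indexes, combined with the on-line failure-function computation and the identity that the minimal period of $x[0,i]$ is $i+1-\mathcal{B}[i]$. Your additional bookkeeping (the telescoping bound of at most $2|x|-1$ appended bits and the observation that the two amortised analyses charge disjoint resources) is sound and only makes explicit what the paper leaves implicit.
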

Using the periodicity lemma (see Lemma 1 in \cite{knuth1977fast}) we in
addition obtain the following lemma.
\begin{lemma}
\label{uniqueperiod}
Let $x$ denote a string over a finite alphabet $\Gamma=\{0,1,\ldots,\gamma-1\}$.
If $x$ has any period $q \leq \lfloor\frac{|x|}{2}\rfloor$ then there exists
a minimal period $p$ of $x$ such that $p\leq \lfloor\frac{|x|}{2}\rfloor$
and $p$ divides all other periods of $x$ whose value does not exceed $\lfloor\frac{|x|}{2}\rfloor$.
It is decidable in time $O(|x|)$ and space $O(x\log\gamma)$ bits whether $x$
has any period $p\leq \lfloor\frac{|x|}{2}\rfloor$ and if it has any such
periods then their minimum can be computed in the same time and space bounds.
\end{lemma}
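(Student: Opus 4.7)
The plan has two largely independent parts. First I would prove the structural statement about periods using the periodicity lemma cited from Knuth, Morris and Pratt. Suppose $x$ has two periods $p_1,p_2\leq \lfloor |x|/2\rfloor$. Then $p_1+p_2\leq |x|\leq |x|+\gcd(p_1,p_2)$, so the periodicity lemma applies and yields that $\gcd(p_1,p_2)$ is again a period of $x$. Now let $p$ denote the minimum among all periods of $x$ that are at most $\lfloor |x|/2\rfloor$; such a $p$ exists by assumption. For any other period $q\leq \lfloor |x|/2\rfloor$ the gcd $\gcd(p,q)$ is a period of $x$ that does not exceed $p$, so by minimality $\gcd(p,q)=p$, i.e.\ $p$ divides $q$. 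This yields both the existence and the divisibility claim.

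Second I would handle the algorithmic part. The minimum period of $x$ that does not exceed $\lfloor |x|/2\rfloor$ is exactly the minimal period of the full string $x$, provided the latter satisfies the bound; otherwise no period in the range exists at all. Hence it suffices to enumerate the sequence of minimal periods of the prefixes of $x$ via Lemma~\ref{periodenumlemma}, retain only the final entry, and compare it to $\lfloor |x|/2\rfloor$. This runs in time $O(|x|)$ and space $O(|x|\log\gamma)$ bits as required, since all intermediate values can be discarded after reading.

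I do not expect a real obstacle: the structural part is a one-line application of Fine--Wilf once the inequality $p_1+p_2\leq |x|$ is observed, and the algorithmic part is a direct appeal to the previous lemma together with the trivial observation that we only need to remember the last minimal period in the enumeration. The only point that needs care is checking that the Fine--Wilf hypothesis $p_1+p_2-\gcd(p_1,p_2)\leq |x|$ is indeed implied by $p_1,p_2\leq \lfloor |x|/2\rfloor$, which follows because $\gcd(p_1,p_2)\geq 1$ and $p_1+p_2\leq |x|$.
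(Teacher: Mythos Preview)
Your proposal is correct and follows essentially the same approach as the paper: both apply the periodicity lemma via the observation that $p+q\leq |x|$ for any two periods $p,q\leq\lfloor |x|/2\rfloor$, deduce that $\gcd(p,q)$ is a period, and conclude by minimality that the smallest such period divides all others; the algorithmic part likewise reduces in both to invoking Lemma~\ref{periodenumlemma} and reading off the minimal period of $x$. Your write-up is in fact slightly more explicit than the paper's in noting that the minimal period of $x$ overall coincides with the minimal period in the range whenever the latter exists.
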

\begin{proof}
Let $x$ denote a string over a finite alphabet $\Gamma=\{0,1,\ldots,\gamma-1\}$
and let $P$ denote the set of all periods of $x$ whose value does not exceed
$\lfloor\frac{|x|}{2}\rfloor$. If $P$ is empty, then the lemma holds
as $x$ has no relevant periods. Otherwise let $p$ be the minimal element of $P$.
Then for any element $q$ of $P$ we have
$p+q \leq |x|$ as $p \leq q \leq \lfloor\frac{|x|}{2}\rfloor$. According to the
periodicity lemma this implies that the greatest common divisor of $p$ and
$q$ is also a period of $x$. As $P$ contains all the periods of $x$ not
exceeding $\lfloor\frac{|x|}{2}\rfloor$, $p$ is the minimal element of $P$
and the greatest common divisor of $p$ and $q$ divides $p$ it follows that
this greatest common divisor is $p$. As $p$ is the minimal period of $x$ it
can be computed in time $O(|x|)$ and space $O(|x|\log \gamma)$ bits according to
Lemma \ref{periodenumlemma}.
\end{proof}

\section{Sorting single blocks}
\label{singleblocksorting}
The algorithm by Ferragina et al processes each single block relying on
knowledge about the priorly fully processed following block, in case of the
last block the terminator.
For our algorithm we need to be able to sort a single block without knowing the complete sorted order of the next block.
For this purpose we need to be able to handle repetitions, one of the major
challenges along the way, efficiently.
For the following argumentation we will need the occurring block sizes to be
as similar as possible, so we will deduce a final block size from a target
block size chosen to satisfy memory constraints.
Let $b'$ denote a preliminary target block size such that $0<b'\leq n$. 
From $b'$ we can deduce a final block size $b$ as 
\begin{equation}
	b=\left\lceil\frac{n}{\left\lceil\frac{n}{b'}\right\rceil}\right\rceil\enspace
.
\label{blocksizeeq}
\end{equation}
In Equation \ref{blocksizeeq} the denominator denotes the number $\nu$ of blocks
required for the preliminary block size $b'$. The final block size is
obtained as the smallest number $b$ of symbols required to keep this number of blocks
$\nu$, in particular we have $b \leq b'$
As $b$ is the smallest such number we have $\mu' = \nu b - n$ such that $0 \leq \mu' < \nu$
and equally $\mu = \nu - \mu'$ such that $1 \leq \mu \leq \nu$.
In consequence we can decompose the set of indices $[0,n)$ on $t$ into
$\mu$ sub intervals of length $b$ and $\nu-\mu$ sub intervals of length $b-1$
and define the index blocks $\mathcal{B}_i$ by
\begin{equation}
	\mathcal{B}_i = \left\lbrace
\begin{array}{ll}
\lbrack ib, (i+1)b )                                  & \textnormal{for }  0 \leq i < \mu \\
\lbrack \mu b + (i-\mu) (b-1),\mu b + (i+1-\mu) (b-1) & \textnormal{for }\mu \leq i < \nu
\end{array}
\right.
\end{equation}
Informally we have $\mu$ blocks of size $b$ and $\nu-\mu$ blocks of size 
$b-1$ such that the blocks of size $b$ precede the ones of size $b-1$.
Blocks of size $b-1$ only exist if $b$ does not divide $n$.
When comparing two suffixes of $\tilde{t}$ starting at indices $i$ and $j$ such that $i < j$
we may encounter two cases. 
In the first case we have $\tilde{t}[i,j-1] \ne \tilde{t}[j,j+(j-i-1)]$.
Then the comparison requires handling at most $j-i$ symbol pairs, i.e.~at
most $2b-2$ symbols if the two suffixes start in the same block of length $b$ as defined above.
In the second case we have $\tilde{t}[i,j-1] = \tilde{t}[j,j+(j-i-1)]$, i.e.~a
square with period $j-i$ at index $i$ in $\tilde{t}$.
Thus situations which require us to have access to $\omega(b)$ symbols for sorting the suffixes of
a block of indices on $\tilde{t}$ are induced by repetitions of periods
strictly smaller than $b$.
In consequence only repetitions with a period smaller than $b$ are relevant for our block sorting.
\begin{definition}
Let $\mathcal{B}$ denote a block of $b$ indices on $\tilde{t}$ starting at index $i$, 
	i.e.~the set of suffixes $\tilde{t}_{i+j}$ for $j=0,1,\ldots,b-1$.
\begin{itemize}
\item $\mathcal{B}$ propagates a repetition of period $p, 1\leq p \leq b$ iff $\tilde{t}_i[0,b+2p-1]$ has period $p$.
\item $\mathcal{B}$ generates a repetition of period $p, 1 \leq p \leq b$ iff $\tilde{t}[b-p,b-1] = \tilde{t}[b,b+p-1]$
      and the block of $b$ suffixes starting at $i+b$ propagates a repetition of period $p$.
\end{itemize}
\end{definition}
If a block propagates repetitions of any periods, then there is a unique minimal period
dividing all other propagated periods according to Lemma \ref{uniqueperiod}.
This unique minimal period can then be computed in time $O(b)$ and space $O(b\log\sigma)$ 
bits using minor modifications of standard string algorithms (see Section \ref{computingperiods}).
As there is a unique minimal period propagated by a block if any and 
as for repetition generation we are
only interested in periods which are propagated by the next block
we can compute the relevant generation properties of a block in the same
time and space bounds.
The suffix order on $\tilde{t}$ defined as above defines the order of two
suffixes at indices $i$ and $j$ such that $i\ne j$ by index comparison if $\tilde{t}_i = \tilde{t}_j$.
This case can only appear if the text $t$ itself has a period $p$ which is
smaller than $n$ and divides $n$, i.e.~when $t=\alpha^i$ for some string $\alpha$ and some integer $i>1$.
For sorting the suffixes in our blocks we can avoid this case. If $t$ is an
integer power of a string $\alpha$ such that $|\alpha| < b$ for some exponent $k>1$, then we will
observe that each of our blocks propagates period $p$, thus this case is
easily detected given precomputed repetition propagation information. In
this case the Burrows-Wheeler transform of $t$ can be obtained from the
Burrows-Wheeler transform for $\alpha$ by repeating each symbol $k$ times,
e.g. if $abc$ is the Burrows-Wheeler transform for $\alpha$ and we have
$k=2$ then the transform for $t$ is $aabbcc$.  Similar arguments can be employed to
obtain the suffix array and inverse suffix array of $t$ from the reduced
variants for $\alpha$.
Thus in the following we will without loss of generality assume that $t$ is not an integer power of a
string of period smaller than $b$. In consequence all suffix comparisons
required for sorting the suffixes starting in a block are decided by
character inequality and not by position.
Using information about short period repetitions in the input string, we are
able to handle the sorting of a single block of suffixes extending beyond
the end of the block efficiently by reducing long repetitions as we show in
the following lemma.
\begin{lemma}
A block of $b$ circular suffixes of $\tilde{t}$ can be sorted in
lexicographical order using time $O(b)$ and space $O(b\log b)$ bits
using precomputed repetition propagation data.
\label{suffixsortinglemma}
\end{lemma}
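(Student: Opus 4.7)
The plan is to combine a linear-time suffix sorter applied to an $O(b)$-character window of $\tilde{t}$ with a tie-breaking pass driven by the precomputed repetition propagation data. Let the block start at index $i_0$. I first fix a window $W = \tilde{t}[i_0, i_0 + 3b - 1]$ and run a linear-time suffix sorter (SA-IS or DC3) on $W$ in $O(b)$ time and $O(b \log b)$ bits; this already produces the correct order for every pair of block suffixes whose first point of disagreement falls inside $W$.

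Any pair of block suffixes at $i_0 + a_1$ and $i_0 + a_2$ (with $a_1 < a_2$) not resolved this way must agree throughout $W$, which forces $\tilde{t}[i_0 + a_1, i_0 + 3b - 1]$ to have period $d = a_2 - a_1 \leq b - 1$ over a factor of length at least $2b + 1$. Lemma \ref{uniqueperiod} then supplies a unique minimal period $p^{*}$ of this factor dividing every such $d$, so the tied block suffixes partition cleanly into residue classes modulo $p^{*}$. From the precomputed data I read off $p^{*}$ and extend the $p^{*}$-periodic run by one linear scan to obtain its length $L \geq b + 2 p^{*}$ and the exit character $c = \tilde{t}[i_0 + L]$, which by maximality satisfies $c \neq \tilde{t}[i_0 + L - p^{*}]$. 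Tie resolution is then immediate: within one residue class the later-starting suffix exits the run first and reads $c$ at that offset, while the earlier-starting one still reads $\tilde{t}[i_0 + L - p^{*}]$, so a single comparison of these two fixed characters (performed once for the whole block) decides whether the class is sorted by ascending or descending starting index. Between-class comparisons, and comparisons against the aperiodic preamble, are resolved within the first $p^{*}$ characters by primitivity of the minimal period and are therefore already correctly placed by the suffix sorter. A single linear sweep over the sorter's output then applies the intra-class orientation.

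The main obstacle I expect is making the Fine--Wilf argument rigorous when the tied pairs have starting positions spread across the block and several candidate periods could a priori coexist: the decisive point is that Lemma \ref{uniqueperiod} collapses all of them to the single minimum $p^{*}$, allowing every tied residue class to be described uniformly and tie-broken in one pass without revisiting suffixes individually. The remaining bookkeeping --- running the suffix sorter on $W$, identifying the tied groups from its output, locating the exit character from the propagation data, and assembling the final order --- is routine and fits comfortably inside $O(b)$ time and $O(b \log b)$ bits.
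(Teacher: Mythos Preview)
Your approach is sound and reaches the same conclusion as the paper, but by a genuinely different route. The paper does not sort a fixed window and then break ties; instead, once it detects that the block generates a propagated period $p$, it \emph{constructs a reduced string} $\alpha\beta^{1+\lceil b/p\rceil}\gamma$ (where $\alpha\beta$ is the block, $\beta$ has length $p$, and $\gamma$ is an $O(b)$-length slice of the text immediately past the end of the long $\beta$-run in $\tilde t$) and runs a single linear-time suffix sorter on that $O(b)$-length string. The key observation is that comparing $\alpha\beta^{k}\gamma$-type strings gives the same result after collapsing the common $\beta$-power, so the long repetition $\beta^{m}$ may be replaced by $\beta^{1+\lceil b/p\rceil}$ without changing any pairwise order among the block suffixes. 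Your method instead keeps the window fixed at $3b$ characters and adds a separate Fine--Wilf/residue-class tie-breaking pass. Both are valid; the paper's version avoids the bookkeeping about residue classes and contiguity of tied groups in the sorter's output, while yours avoids manufacturing a synthetic string.

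One step in your write-up needs tightening. You say you ``extend the $p^{*}$-periodic run by one linear scan to obtain its length $L$''. The run can have length $\Theta(n)$, so a naive forward scan from the end of $W$ would destroy the $O(b)$ time bound. The paper addresses this explicitly: part of the precomputed propagation data is an array that, for each block propagating a repetition, records the index of the first subsequent block that does \emph{not} propagate it; one jumps there in $O(1)$ and then scans at most $O(b)$ further characters to locate the exact exit point and the mismatch character. You should invoke this jump mechanism rather than a bare linear scan. A related minor slip: you write $c=\tilde t[i_0+L]$ as though the periodic run starts at $i_0$, whereas your own setup has it starting at $i_0+a_1$ (more precisely at $i_0+a_{\min}$ for the leftmost tied position); the tie-breaking comparison you actually use---the exit character against the character one period earlier---is correct, so this is only notational.
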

\begin{proof}
The pairwise order between all suffixes of the block is decided by character
inequalities. However these character inequalities may happen after $\Omega(n)$ symbols in each case.
Note however that if we compare two strings $u$ and $v$ lexicographically such that
$u=\alpha \beta^{k} \gamma$ and $v=\alpha \beta^{k}\delta$ we obtain the
same result as when comparing $u'=\alpha \gamma$ and $v'=\alpha\delta$, i.e.
as when we remove the repetition of $\beta$ from both.
Now assume we are sorting the circular suffixes of $t$ for a block starting
at index $i$ with block length $b$, i.e.~we are sorting $b$ strings of
infinite length where for each pair of strings the comparison ends after
less than $n$ steps.
When comparing a pair of such strings starting at offsets $j$ and $k$  in the
block such that without loss of generality $j<k$ we may encounter the following cases.
\begin{enumerate}
\item There exists $\ell < (b-j)+(k-j)=b+k-2j$ such that $\tilde{t}[i+j+\ell]\ne \tilde{t}[i+k+\ell]$.
Then as $b+k-2j < 2b$ the order between the strings would be correctly
reflected in a suffix sorting of $\tilde{t}[i,i+2b-1]$, where $2b$ is $O(b)$.
\item Otherwise we have 
$\tilde{t}[i+j,i+k-1] = \tilde{t}[i+k,i+k+(k-j)-1]$,
i.e.~a square of period $k-j$ at position $i+j$ in $\tilde{t}$ and this
repetition extends to at least index $i+b+(k+j)-1$ in $\tilde{t}$.
There is thus a complete instance of a rotation of the root of the square in $\tilde{t}$ 
after the end of the considered block (the root of the repetition may
overlap the block boundary or come to lie outside of the block).
In consequence the block may generate a repetition of period $k-j$ depending
on whether the following block propagates a repetition of this period. If
$\tilde{t}[i+j,i+k-1]$ is an integer power
of a shorter primitive string $u$, then let $p=|u|$, otherwise let $p=k-j$.
Now we may encounter
two sub cases. If the following block does not propagate period $p$, then
the order between our two strings starting at index $i+j$ and $i+k$ in
$\tilde{t}$ would be correctly reflected in a suffix sorting of
$\tilde{t}[i,i+2b+2p-1]$ due to the definition
of the propagation of a repetition. Note that $2b+2p-1$ is $O(b)$ as $p \leq k-j < b$.
Now assume the following block does propagate period $p$. Let
$\beta=\tilde{t}[i+b,i+b+p-1]$ and let $\alpha=\tilde{t}[i,i+(b-p)-1]$. Then the
block can be written as $\alpha\beta$. Further we know that
$\tilde{t}[i]\tilde{t}[i+1]\ldots = \alpha \beta^{m} \gamma$ for some integer $m>1+\lceil b/p \rceil$ and $\gamma[0,p-1]\ne \beta$.
Using precomputed information about the single blocks' repetition propagation
properties and adequate data structures it is simple to deduce the numbers
$m$ and $\gamma$ in time $O(b)$. For each block propagating a repetition we
store the index of the closest following block in $\tilde{t}$ which no
longer propagates this repetition in an array. This array can be computed in
time $O(n)$ with a single scan over the list of repetitions propagated by
the single blocks. Even for a moderate block size in $O(\log^2 n)$ the space
required for this array is asymptotically negligible compared to the text
and it can be stored in external memory. 
Now for computing $m$ and $\gamma$  we can go to the last following block still propagating 
the repetition and perform a naive scan of
the text until the repetition ends. This scan will terminate in time $O(b)$.
For determining the order of the suffixes in the current block it is
sufficient to sort $\alpha\beta^{1+\lceil b/p \rceil}\gamma$ which has
length $O(b)$.
\end{enumerate}

\end{proof}
Sorting the suffixes of $\tilde{t}$ starting in a given block of length $b$
using the precomputed repetition generation and propagation block properties
can thus be performed in time $O(b)$ using space $O(b\log b)$ bits in internal memory.
Given the explicit suffix sorting for a block,
it is trivial to determine, whether any other suffix in the block is
lexicographically smaller or larger than the first suffix of the block 
and store the resulting bit vector in external memory.
For forward searching using the suffix array it is useful to in addition have the
longest common prefix (LCP) array.
For two finite strings $u,v$ let $\textnormal{LCP}(u,v)=\argmax_{l=0}^{\min\{|u|,|v|\}} u[0,\ell-1] = v[0,\ell-1]$.
For two infinite strings $u,v$ let $\textnormal{LCP}(u,v)=\infty$ if $u=v$
and $\textnormal{LCP}(u,v)=i$ where $i$ is the smallest non negative integer such that $u[i]\ne v[i]$ otherwise.
For a block of indices $[i,i+b-1]$ on $\tilde{t}$ such that $i\in\mathbb{N}$ let $\mathcal{A}$
denote the permutation of $i,i+1,\ldots,i+b-1$ such that $\tilde{t}_{\mathcal{A}[j]} < \tilde{t}_{\mathcal{A}[j+1]}$ for $j=0,1,\ldots,b-2$.
Then the LCP array of the block is defined by $\textnormal{LCP}[0]=0$ and
$\textnormal{LCP}[i]=\textnormal{LCP}(\tilde{t}_{\mathcal{A}[i-1]},\tilde{t}_{\mathcal{A}[i]})$
for $i=1,2,\ldots,b-1$.
Using a repetition reduction method similar to the suffix sorting case 
we obtain the following result.
\begin{lemma}
\label{computinglongestcommonprefixes}
The LCP array for a block of $b$ circular suffixes on $\tilde{t}$ can be
computed in time $O(b)$ and space $O(b\log b)$ bits
using precomputed repetition propagation data.
\end{lemma}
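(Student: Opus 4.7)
The plan is to mimic the repetition-reduction strategy used in Lemma~\ref{suffixsortinglemma}. Reduce the $b$ infinite circular suffixes of the block to suffixes of a finite text $T'$ of length $O(b)$, run a standard linear-time LCP-array construction on $T'$, and then correct those LCP values whose true counterpart in $\tilde{t}$ extends beyond the truncation in $T'$.

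First I would invoke Lemma~\ref{suffixsortinglemma} to obtain the sorted order $\mathcal{A}$ of the $b$ suffixes in time $O(b)$ and space $O(b\log b)$ bits. The proof of that lemma already provides a suitable reduced text $T'$: either $\tilde{t}[i,i+2b-1]$ when the block does not propagate a long repetition, or $\alpha\beta^{1+\lceil b/p\rceil}\gamma$ when the block propagates a minimal period $p$. A linear-time LCP-array construction (such as Kasai's algorithm, applied to the suffix array of $T'$) then produces a preliminary LCP array $\textnormal{LCP}'$ for the suffixes of $T'$ in $O(b)$ time and $O(b\log b)$ bits. The LCP of two adjacent block suffixes in $\mathcal{A}$ equals the minimum of the corresponding range in $\textnormal{LCP}'$ and can be retrieved in $O(1)$ after $O(b)$ range-minimum preprocessing.

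Next I would correct the returned values. In the non-repetitive case every relevant first mismatch lies inside $T'$, so no correction is needed. In the repetitive case, for adjacent block suffixes at positions $j < k$ with $k-j$ \emph{not} a multiple of $p$, the minimality of $p$ forces a mismatch within the first $p$ symbols (otherwise $\gcd(p,k-j) < p$ would also be a period, contradicting minimality), which is well inside $T'$, so the value is already correct. For adjacent pairs with $k-j$ a multiple of $p$, both suffixes stay aligned inside the periodic prefix until the shorter one falls off into the artificial $\gamma$; a direct calculation shows that the induced error is the same additive constant $L - L'$, where $L'$ is the length of the periodic prefix of $T'$ and $L$ is the length of the periodic prefix of $\tilde{t}$ starting at the block. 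The value $L$ is obtained in $O(b)$ by consulting the ``first following non-propagating block'' array constructed in Lemma~\ref{suffixsortinglemma} and then scanning naively until the period breaks. Each adjacent pair needing correction is identified in $O(1)$ by testing whether its mismatch in $T'$ lies in the $\gamma$ region.

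The main obstacle will be establishing that the correction in the repetitive case is indeed the single additive constant $L - L'$ for every aligned adjacent pair, rather than a pair-dependent expression, and keeping the case split for aligned versus non-aligned pairs clean. With this in place, sorting, LCP on $T'$, range-minimum preprocessing, and the correction pass each cost $O(b)$ time and fit in $O(b\log b)$ bits, yielding the claimed bounds.
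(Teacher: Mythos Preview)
Your plan is essentially the paper's proof: build a reduced text of length $O(b)$ as in Lemma~\ref{suffixsortinglemma}, run a linear-time LCP construction on it, restrict to the $b$ block suffixes by taking range minima, and then patch the values that were artificially truncated by the reduction. The paper also leaves the corrected values implicit to stay within $O(b\log b)$ bits, just as you do.

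There is, however, a genuine gap caused by conflating \emph{propagation} with \emph{generation}. Your case split is on whether the current block propagates a period~$p$, and in the non-propagating branch you take $T'=\tilde t[i,i+2b-1]$. But a block can fail to propagate any period $\le b$ and still contain two suffixes whose true LCP is $\Omega(n)$: take the block $xy\,a^{\,b-2}$ followed in $\tilde t$ by $a^{M}z$ with $M$ large. The prefix $xy$ kills every short period starting at position $i$, so nothing is ``propagated'', yet the suffixes at offsets $2$ and $3$ share $\Theta(M)$ characters and their mismatch lies far outside $\tilde t[i,i+2b-1]$. The paper avoids this by splitting on whether the block \emph{generates} a repetition (equivalently, whether the \emph{following} block propagates), which is exactly the condition under which the reduced text $\alpha\beta^{1+\lceil b/p\rceil}\gamma$ of Lemma~\ref{suffixsortinglemma} is invoked.

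The same confusion undermines your aligned/non-aligned dichotomy. Under the correct ``generates'' hypothesis the prefix $\alpha$ need not itself have period $p$, so a pair with $k-j$ a multiple of $p$ can already mismatch inside $\alpha$ and needs no correction; your $\gcd$ argument (``mismatch within the first $p$ symbols'') only goes through when the whole block has period $p$. The paper therefore does not classify pairs by alignment at all: it declares the stored value correct when it is smaller than the distance from either suffix to the next block boundary, and otherwise overwrites it with $(i+b-\max\{\mathcal{A}[j-1],\mathcal{A}[j]\})+o$, where $o$ is the offset at which the propagated period first breaks past the block. Your own final test (``mismatch lies in the $\gamma$ region'') is in fact a sound criterion and would rescue the algorithm, but the aligned/non-aligned justification preceding it should be dropped and replaced by this direct check.
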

\begin{proof}
As the input string is without loss of generality not an integer power of a
string of period $p < b$ all suffix comparisons on $\tilde{t}$ within a
block are decided by character inequalities.
In consequence the values in the LCP array of a block are finite but the array can contain values in $\Omega(n)$
as the considered suffixes extend beyond the end of the block.
Given a finite string $s$ and the suffix array of $s$ the LCP array for $s$
can be computed in time $O(|s|)$ and stored in an array of size $O(s\log s)$ bits
(see \cite{DBLP:conf/cpm/KasaiLAAP01}).
Thus if the block does not generate a repetition, then we can compute the LCP array for the
block in time $O(b)$ and represent it as an array using $O(b\log b)$ bits.
This involves extending the text block by $O(b)$ elements (see above), 
computing the LCP array using the suffix array and the text and filtering the LCP array
so that only suffixes inside the original block remain.
If we find suffixes outside the original block between two original suffixes than we need to
combine the minimum of the corresponding entries with the second original
block value, e.g.~if the block starts at index $i$ and we have 
$\mathcal{A}[j] < i+b, \mathcal{A}[j+1]\geq i+b, \mathcal{A}[j+2]\geq i+b, \ldots, \mathcal{A}[j+k] < i+b$ 
then the LCP value between the suffixes at the block indices $\mathcal{A}[j]$ and $\mathcal{A}[j+k]$
is $\min\{ \textnormal{LCP}[j+1], \textnormal{LCP}[j+2], \ldots, \textnormal{LCP}[j+k] \}$
and we drop the LCP values relating to suffixes outside the block.
Now assume the block generates a repetition of period $p$.
Then as described in case $2$ of the proof of Lemma \ref{suffixsortinglemma} it is sufficient to suffix sort the string 
$\alpha\beta^{1+\lceil b/p \rceil}\gamma$ as defined above
to obtain the order of the suffixes in the current block. 
This block has length $O(b)$, so its LCP array can be computed in time $O(b)$ and
stored in space $O(b\log b)$ bits. 
We first remove the values corresponding
to suffixes outside the block as we did for the non propagating case. 
Then the LCP value at index $j>1$ is correct if $\textnormal{LCP}[j] <
\min\{i+b-\mathcal{A}[j-1],i+b-\mathcal{A}[j]\}$, i.e.~if the LCP between
the two corresponding suffixes is too short to reach the next block for at
least one of the two.
Otherwise both suffixes are equal at least until the comparison of the two has 
extended to the next block. 
Then the stored LCP value will be too small and thus incorrect.
We can however easily correct these values using the precomputed repetition propagation information. We
can compute in time $O(b)$ at which offset from the start of the next block
the propagated repetition first breaks.
Let $o$ denote this offset. Then we obtain the correct LCP value for index $j>0$ as 
$\textnormal{LCP}[j] = (i+b-\max\{\mathcal{A}[j-1],\mathcal{A}[j]\}) + o$.
If we store this value explicitly, then the LCP array will take space
$O(b\log n)$ bits.
As we can however detect and correct the incorrect LCP
values in constant time given the suffix array and number $o$ we can leave
the LCP array as is taking space $O(b\log b)$ and still have constant time access
to the correct values by computing them as required.
\end{proof}
Using the suffix and LCP array the time for a forward search of a pattern of
length $m$ in a block of size $b$ reduces from $O(m\log b)$ to $O(m+\log b)$ (see \cite{manber1993suffix}).
\section{Merging Pairs of Adjacent Blocks}
\label{mergepairsect}
In our modified algorithm we replace the completely skewed binary merge tree used
in \cite{DBLP:journals/algorithmica/FerraginaGM12} by a balanced binary merge tree.
Consequently we will need to be able to merge blocks with a block size in $\Omega(n)$.
For merging two adjacent blocks we need the following components:
\begin{enumerate}
\item The BWT of the left and right block. These can be compressed and in external memory as they will be 
      scanned sequentially.
\item An internal memory index of the left block suitable for backward search in $O(1)$ time per step.
      An FM type index using space $b_l H_k + o(n\log\sigma)$ bits can be used
      where $b_l$ is the length of the left block and $H_k$ denotes the k'th order entropy of the left block
      (see \cite{navarro2007compressed}).
\item The $gt$ bit vectors for the left and right block.
      Scanned sequentially and thus can be read streaming from external memory.
\item The number of circular suffixes in the left block smaller than the rightmost suffix of the right block.
      Used as the start point for the backward search.
\item The gap array $G$.
\end{enumerate}
The first three are equivalent to those used in \cite{DBLP:journals/algorithmica/FerraginaGM12}.
The rank of the rightmost suffix in the right block relative to the suffixes
of the left block can be obtained by employing forward search on one or more
text blocks.
If the left block is a single block which was produced by explicit suffix sorting using the
method of Section \ref{singleblocksorting}, then the rank can be obtained
using classical forward search in the suffix array while using the adjoined
LCP array.
This takes time $O(n + \log b)$ in the worst case  (on average this can be expected to be $O(\log n + \log b)$, see \cite{szpankowski1991height}).
If the left block was already obtained by merging $c$ blocks together, then the
desired rank can be obtained as the sum of the ranks of the suffix relative
to all single blocks composing the left block in time $O(c(n+\log b))$.
Assuming the blocks are merged together in a balanced binary merge tree
the total time used for forward searches is $O(\frac{n}{b}\log\frac{n}{b}n)$
in the worst case and $O(\frac{n}{b}\log\frac{n}{b}\log n)$ on average.
If we choose $b\in O(\frac{n}{\log n})$ then this becomes $O(n\log n\log\log n)$.
The memory required for the index of the left block in internal memory will be 
$b_l\lceil\log\sigma\rceil+o(b_l\log\sigma)$ bits for a left block size of $b_l$ assuming that the entropy compression is ineffective. 
This leaves us with the space required for the gap array. 
In the original algorithm this is a conventional array in internal memory taking space $O(b\log n)$ bits for a
left block size of $b$.
As we want to be able to merge blocks with size in $\Omega(n)$ this space requirement is too high.
Using Elias $\gamma$ code (cf.~\cite{elias1975universal}) we can store the gap array for merging a left
and right block of length $b_l$ and $b_r$ respectively in $O(b_l+b_r)$ bits
of space as we show in the following lemma.
\begin{lemma}
\label{gammagapfulllemma}
Let $G$ denote an array of length $\ell$ such that $G[i]\in\mathbb{N}$ for $0\leq i < \ell$
and $\sum_{i=0}^{l-1} G[i]=s$ for some $s\in\mathbb{N}$. Then the $\gamma$ code for $G$ takes $O(\ell + s)$ bits.
\end{lemma}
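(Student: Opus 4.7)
The plan is to bound the total length of the concatenated $\gamma$-codewords by summing per-element code lengths and then applying a simple logarithmic inequality to convert a sum of logs into the sum $s$.

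First, I would recall the Elias $\gamma$ code: for a positive integer $k$ it uses $1+2\lfloor\log_2 k\rfloor$ bits, namely a unary prefix of length $\lfloor\log_2 k\rfloor$ followed by the binary representation of $k$. Because the lemma permits $G[i]=0$ (the paper's convention has $0\in\mathbb{N}$), I would encode $G[i]+1\geq 1$ in place of $G[i]$. This costs only a constant extra bit per element and keeps the per-element code length $O(\log(G[i]+1))$. The $\gamma$ code is prefix-free, so concatenating the codewords incurs no delimiter overhead.

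Next, summing the per-element costs yields total bit length
\[
   \sum_{i=0}^{\ell-1}\bigl(1+2\lfloor\log_2(G[i]+1)\rfloor\bigr)
   \;=\; \ell \;+\; 2\sum_{i=0}^{\ell-1}\lfloor\log_2(G[i]+1)\rfloor.
\]
The crux is then to bound the remaining sum by $s$. I would use the elementary inequality $\log_2(x+1)\leq x$ for every integer $x\geq 0$, which is equivalent to $2^x\geq x+1$ and follows by a one-line induction. Applying it termwise gives $\sum_i \log_2(G[i]+1)\leq \sum_i G[i]=s$, and hence the overall bound $\ell+2s = O(\ell+s)$ bits.

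There is essentially no serious obstacle; the only choice to make is which inequality to apply. A Jensen-type bound pulls the logarithm outside the sum and yields $\ell\log_2(1+s/\ell)$, which is again $O(\ell+s)$ but requires a small case split between the small-average and large-average regimes. The termwise bound $\log_2(x+1)\leq x$ is strictly simpler and produces the desired linear form in one step, which is what I would prefer in the writeup.
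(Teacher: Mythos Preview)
Your proof is correct, but it follows a different route than the paper's. Both arguments arrive at the same expression $\ell + 2\sum_i\lfloor\log_2(G[i]+1)\rfloor$; from there the paper applies the AM--GM/Jensen-type step you mention only in passing, bounding $\sum_i\log_2(G[i]+1)\le \ell\log_2(1+s/\ell)$ and then splitting into the cases $s\le\ell$ (yielding $\le 3\ell$) and $s>\ell$ (yielding $\le 5s$). You instead use the termwise inequality $\log_2(x+1)\le x$ and obtain $\ell+2s$ in one line. Your argument is strictly simpler and fully sufficient for the lemma as stated. The paper's approach, while more laborious, yields the sharper intermediate bound $\ell\bigl(1+2\log_2(1+s/\ell)\bigr)$, and this form is what the paper actually reuses downstream when analysing the space of sparse partial gap arrays; the crude termwise bound would not give those later estimates.
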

\begin{proof}
Elias $\gamma$ code stores the number $z>0$ as the bit sequence $0^{\lfloor\log z\rfloor}\textnormal{bin}(z)$
of length $1+2\lfloor\log z\rfloor$
where $\textnormal{bin}(z)$ denotes the binary representation of $z$ (most significant to least significant bit left to right).
The $\gamma$ code for $G$ thus requires
\[
\ell + 2 \sum_{i=0}^{\ell-1} \lfloor \log (G[i]+1) \rfloor \enspace
\]
bits. As $\gamma$ code is unable to represent the
number zero but zero is a valid entry in $G$ we add $1$ to each
element. We have
\[
\sum_{i=0}^{b_l-1} \lfloor \log (G[i]+1) \rfloor \leq \sum_{i=0}^{b_l-1} \log (G[i]+1) = \log \prod_{i=0}^{b_l-1}
(G[i]+1)\enspace .
\]
and the last product is maximal for $G[0]=G[1]=\ldots =G[\ell-1]\approx \frac{s}{\ell}$ 
(conceptually due to the fact that $x(1-x)$ has its maximum at $x=(1-x)=\frac{1}{2}$, i.e.~a product of
two positive real numbers of constant sum is maximal if both are chosen as half the
sum). 
Thus the maximum space used for the gap array in $\gamma$ code is bounded by
\[
\ell + 2 \sum_{i=0}^{\ell-1} \lfloor \log G[i] + 1 \rfloor 
\leq \ell \left( 1 + 2 \log \left(\frac{s}{\ell} + 1\right) \right)
\]
For $s \leq \ell$ the argument of the logarithm is between $1$ and $2$ thus the
space usage is bounded by $3\ell$. For $s > \ell$ let $s=2^r l$ for some $r\in\mathbb{R},r>0$.
Then we have
\[
\begin{array}{lcl}
\ell ( 1 + 2 \log ( \frac{s}{l}+1 ) ) & \leq & \ell ( 1 + 2 + 2 \log \frac{s}{l} ) \\
&\leq&\frac{s}{2^r}(3+2r) = s \frac{3+2r}{2^r} \leq s ( 3 + \frac{1}{\log_e 2} ) \leq 5 s
\end{array}
\]
which is in $O(s)$. Thus in both cases the space is in $O(\ell+s)$ bits.
\end{proof}
Elias $\gamma$ code however is not suitable for efficient updating as we would need it for
computing the gap array.
During the computation of a gap array each step of the backward search leads to an increment of exactly one position in the gap array.
In particular the sum over the elements of the array after $s$ steps is $s$.
The $\gamma$ code representation of the gap array does not allow efficient
updating of single elements.
We overcome this problem by producing partial sparse gap arrays, which we write to external memory.
We accumulate an amount $c$ of indices for incrementing in internal memory
before we write a sparse variant to external memory, which implies that the
sum over the sparse array written will be $c$. Whenever we have produced two
sparse gap arrays in external memory featuring the same element sum, then we merge
the two together into a sparse array of twice that sum. At the end of the
process we need to merge the remaining set of arrays into a single final
gap array.
Let $G$ denote a gap array of length $\ell$ such that $\sum_{i=0}^{\ell-1} G[i] = s$ for 
some $s > 0$ with $k>0$ non-zero values.
Let $i_0,i_1,\ldots,i_{k-1}$ denote the sequence of indices of non-zero values in $G$ in increasing order.
We store two $\gamma$ coded bit streams for representing $G$. In the first we store the 
sequence $i_0+1,i_1-i_0,i_2-i_1,\ldots,i_{k-1}-i_{k-2}$
and in the second we store the subsequence of non-zero values in $G$.
Note that the elements of both sequences are strictly positive integers, so
we do not need to add $1$ for the $\gamma$ code.
The maximum space usage for the first sequence is bounded by
$k(1+2\log\frac{\ell}{k})$ and for the second by
$k(1+2\log\frac{s}{k})$ which sums up to $2k(1+\log\frac{s\ell}{k^2}))$ 
(see the proof for Lemma \ref{gammagapfulllemma} for details on obtaining these bounds).
For fixed $s,\ell$ this function has a maximum for $k=\frac{\sqrt{2s\ell}}{e}$
with value $\frac{4\sqrt{2s\ell}}{e\log_e 2}$ where $e$ denotes the base of the natural
logarithm. 
As we by definition have $k \leq \min\{\ell,s\}$ this maximum can
only be observed for $s \geq \frac{2\ell}{e^2} \approx .2706 \ell$ (and equally
$\ell \geq \frac{2s}{e^2}$). 
For $s < \frac{2\ell}{e^2}$ we obtain the maximum space usage for $k=s$ and this space maximum 
is then bounded by $2s(1+\log\frac{\ell}{s})$.
While computing a complete gap array we may at an intermediate stage have no
more than two arrays with a sum of $c 2^i$ for $i=0,1,\ldots,\lfloor\log\frac{s}{c}\rfloor$ 
when the arrays of the smallest sum we produce have sum $c$ for a constant $c \geq 1$.
We will never have two arrays for more then two of the sizes while merging
and only a single one for all other sizes, but for the sake of simplicity in
obtaining a bound on the size of the sum of the arrays we will assume that
two for each size are possible.
For those arrays with a sum of at most $\frac{\ell}{4}$ we can bound the space used by
\[
\begin{array}{lcl}
2 \sum_{i=0}^{\lfloor\log \frac{l}{4c}\rfloor} 2(2^ic)(1+\log\frac{\ell}{2^ic}) & \leq &
4c \sum_{i=0}^{\lfloor\log \frac{l}{4c}\rfloor} 2^i ( 1 + \log \frac{\ell}{2^ic} ) \\
& \leq & 4c \sum_{i=0}^{\lfloor\log \frac{\ell}{4c}\rfloor} 2^i ( 1 + \log \frac{\ell}{2^i } ) \\
& \leq & 8c \sum_{i=0}^{\lfloor\log \frac{\ell}{4c}\rfloor} 2^i ( \log \frac{\ell}{2^i } ) \\
& \leq & 8c \sum_{i=0}^{\lfloor\log \frac{\ell}{4c}\rfloor} 2^i ( \log \ell - i  ) \\
& \leq & 8c \sum_{i=0}^{\lceil\log \ell \rceil} 2^i ( \lceil \log \ell \rceil - i  ) \\
& \leq & 2^{\lceil\log \ell \rceil+3}c \sum_{i=0}^{\lceil\log \ell \rceil} \frac{i}{2^i} \\
& \leq & 2^{\lceil\log \ell \rceil+4}c \\
& \leq & 32 c \ell
\end{array}
\]
Thus the space for these arrays is $O(\ell)$ bits as $c$ is a constant. 
For partial arrays with a sum of more than $\frac{\ell}{4}$ we can resort to dense arrays. We may have
a constant number of arrays with a sum between $\frac{\ell}{4}$ and $\ell$.
These take space $O(\ell)$ (see the proof for Lemma \ref{gammagapfulllemma}). Finally
we can have a logarithmic number of arrays with a sum larger then $\ell$.
Analogously to the arrays with a sum of at most $\frac{\ell}{4}$ we can deduce
that the space for these arrays is $O(s)$ bits. In total all partial gap
arrays together take space $O(s+\ell)$ bits.
Two sparse partial gap arrays can be merged in time linear in the number of
their respective non-zero elements which is bounded by the sum over the
elements.
Two of the dense partial gap arrays we use can also be merged in time linear in the sum
over their elements, as these arrays are only used if the sum over their
elements reaches a quarter of there length.
Merging two partial gap arrays is in both cases a streaming operation and requires
virtually no internal memory.
If the merging process starts with partial arrays for sum $1$, the final
array has length $\ell$ and the sum over its elements is $s$
then the total merging time can be bounded by $\sum_{i=0}^{\lceil\log s\rceil}
\min\{s,\ell\}$.
For merging two blocks of size $b_l$ and $b_r$ respectively this is
in $O(\min\{b_l,b_r\}\log\max\{b_l,b_r\})$ which becomes $O(b_l\log b_l)$ if
$b_l\in\Theta(b_r)$ as it holds for balanced binary merge trees.
If we accumulate $\frac{b_r}{\log^2 b_r}$ indices for incrementing in an array
in internal memory we can reduce the time used for merging without
asymptotically increasing the total memory footprint of the algorithm.
The space used for the array in internal memory is $O(\frac{b_r}{\log b_r})$
bits and thus asymptotically negligible compared with the (uncompressed) text.
When the internal memory array is full, then we first need to sort it so we can
write the partial gap array representation. 
This can be performed in time linear in the length of the array by using a
two phase radix sort with a number of buckets in $O(\sqrt{b_r})$, which require
additional internal memory of $O(\sqrt{b_r})$ words or $O(\sqrt{b_r}\log b_r)$ bits.
For both run-time and space usage the radix sort does not asymptotically
increase the resources required for the merging.
Starting with partial gap arrays of sum $\frac{b_r}{\log^2 b_r}$ implies the
total number of such arrays is $O(\log^2 b_r)$. 
Consequently the binary merge tree for obtaining the final gap array has
a depth in $O(\log \log^2 b_r)=O(\log\log b_r)$ and it can be processed
in time $O(b_r \log\log b_r)$.

The $gt$ array for the merged block can be composed by concatenating the
$gt$ array for the left block and an array storing the respective
information for the right block computed while performing the backward search
for filling the gap array. 
For this purpose we need to know the rank of the leftmost suffix in the left block.
This can either be computed using forward
search on the suffix arrays of the basic blocks or extracted from a
sampled inverse suffix array which can be computed along the way.
The sampled inverse suffix arrays of two blocks can just like the BWTs of the
two blocks be merged using the gap array. This is also an operation based on
stream accesses, so it can be done in external memory in time $O(b_l + b_r)$
when merging two blocks of size $b_l$ and $b_r$.
\section{BWT Computation by Balanced Tree Block Merging}
\label{bwtbalanced}
Using the building blocks described above we can now describe the complete
algorithm for computing the BWT of $t$ by merging basic blocks according to
a balanced binary tree.
\begin{enumerate}
\item Choose a target block size $b'\in O(\frac{n}{\log n})$ and deduce a
block size $b=\lceil\frac{n}{\lceil\frac{n}{b'}\rceil}\rceil$ such that the number of blocks $c$ satisfies 
$c=\lceil\frac{n}{b}\rceil=\lceil\frac{n}{b'}\rceil$ and
$n$ can be split into blocks of size $b$ and $b-1$ only. Split $t$ such that
the blocks of size $b$ appear before those of size $b'$. This step takes
constant space and time.
\item Compute which blocks in $t$ propagate repetitions of period at most
$b$ and for each block which is followed by a block propagating a repetition
whether it is generating this repetition. This takes time $O(n)$ in total and
space $O(b\log\sigma) = O(\frac{n\log\sigma}{\log n})\subseteq O(n)$ bits. 
The result data can be stored in external memory.
\item Compute a balanced merge tree for the blocks. Start with a root
representing all blocks. If a node containing a single block is considered
produce a leaf and stop. Otherwise for an inner node representing $k>1$ blocks
produce a left subtree from the $\lceil\frac{k}{2}\rceil$ leftmost blocks and a right
subtree from $\lfloor\frac{k}{2}\rfloor$ rightmost blocks in $t$. The
tree has $O(\log n)$ nodes. Each node stores at most two (start and end) block
indices taking $O(\log\log n)$ bits and two node pointers also taking space
$O(\log\log n)$ bits. So the total tree takes space $O(\log n\log\log n)$ bits.
It can be computed in time $O(\log n)$.
\item Sort the blocks and store the resulting BWT, $gt$ and sampled inverse
suffix arrays in external memory. Using the suffix and LCP arrays of the basic
blocks also compute the start ranks necessary for the backward searches when
merging the blocks together. This takes time $O(n\log n\log\log n)$ in the
worst case and $O(n)$ on average and space $O(b\log b) = O(\frac{n}{\log n}\log\frac{n}{\log
n})\allowbreak =O(n)$ 
bits of internal memory.
\item Process the merge tree. Mark all leafs as finished and all inner
nodes as unfinished. While there are unfinished nodes choose any unfinished
node with only finished children, merge the respective blocks and mark the
node as finished. There are $O(\log n)$ leafs and the tree is balanced, so
it has $O(\log\log n)$ levels. Each single level can be processed in time
$O(n\log\log n)$. So the total run time for the tree merging phase is
$O(n\log^2\log n)$. The maximum internal memory space usage appears when
performing the merge operation at the root of the tree. Here we need space
$b_l H_k + o(b_l \log\sigma)$ bits where $b_l$ denotes the sum of the length
of the blocks in the left subtree which is $O(n)$ and $H_k$ denotes the
$k$'th order entropy of the text comprising those text blocks.
\end{enumerate}
Summing over all steps the run-time of the algorithm is $O(n\log n\log\log n)$
in the worst case and $O(n\log^2\log n)$ on average. In practice this means
we can compute the BWT of a text as long as we are able to hold the text 
(more precisely the text for the left subtree of the merge tree) in internal memory.
If we can hold a fixed fraction of the text in main memory, then we can
still compute the BWT of the text in the same run-time by resorting to the
original iterative merging scheme from \cite{DBLP:journals/algorithmica/FerraginaGM12}.
We decompose the text into blocks of size $b'$ such that $b'\leq \frac{n\log\sigma}{c\log n}$
where $\frac{1}{c}$ is the fixed fraction of the text we can hold in
internal memory and compute a partial BWT for each of these blocks where the
suffixes sorted are considered as coming from the whole text, i.e.~suffix
comparisons are still over $\tilde{t}$ and not limited to a single of the blocks.
Then we merge these blocks along a totally skewed merge tree such that the
left block always has size about $b'$. The size of the set of partial sparse gap
arrays required at any time remains bounded by $O(n)$ bits. As the number of
blocks is fixed, the total asymptotical run-time of the algorithm remains
$O(n\log n\log\log n)$ in the worst case and $O(n\log^2\log n)$ on average.
\section{Parallelisation}
\label{sect:parallelisation}
Many of the steps of our algorithm are parallelisable. As building blocks
we will need representations of the required data structures in external memory
which allow accessing parts without performing a complete sequential scan
and we will require parallel versions of
\begin{itemize}
\item the construction of rank indexes,
\item the computation of gap arrays,
\item the merging of BWT blocks,
\item the merging of sampled inverse suffix arrays and
\item the sorting of single blocks.
\end{itemize}
As the merging of sampled (inverse) suffix arrays is basically a simplified
version of the merging of BWT blocks (the difference is not all values are
present), we will only describe the merging of BWT blocks and leave the
sub sampled merging variant as an exercise for the reader.
In this paper we consider shared memory parallelisation only, i.e.~a setting
where all processors involved have access to the same internal memory. Some
of the algorithms may be modified for distributed memory settings.
\subsection{Data Structures in External Memory}
We are using the following data structures in external memory:
\begin{itemize}
\item The $gt$ bit vectors
\item Burrows Wheeler transforms of blocks
\item Dense $\gamma$ coded gap arrays
\item Sparse $\gamma$ coded gap arrays
\end{itemize}
Arbitrary positions in the $gt$ bit vectors can be accessed without further
information, so no additional information is necessary to support scanning
starting from a given position.
For the Burrows Wheeler transform sequences we choose the following
representation. For a given string $B=b_0\ldots b_{m-1}$ over $\Sigma$
and a given block size $d\in O(\log^2 n)$ we partition $B$ into blocks
of size $d$ such that all but the last block have size $d$. Each block
factor is stored using run-length, Huffman and $\gamma$ code. We first compute a
run length representation of the block which gives us a sequence of pairs
comprised of symbols and number of consecutive occurrences. As an example the
factor $aabbbcaaa$ would be transformed to $(a,2),(b,3),(c,1),(a,3)$. This
sequence of pairs is stored using Huffman code for the symbols and Elias
$\gamma$ code for the run lengths. As we assume a constant alphabet size,
the dictionary for the Huffman code takes constant space per block. The
space used by the Huffman coded sequence of symbols is bounded by $d \lceil\log\sigma\rceil$.
According to our considerations about $\gamma$ code above the maximum amount
of space required for storing any sequence of run lengths for a block of
length $d$ is $d(1+2\lceil\log(1+1)\rceil)=3d$. So in the worst case the
representation of the block BWT in this code is in $O(d\log\sigma)$ bits.
In consequence each block can be addressed using a fixed size pointer of
length in $O(\log (m\log\sigma))$. As we have $d\in O(\log^2 n)$
and are only storing BWTs of length $m\leq n$ we are storing at most
$O(\frac{m}{\log^2 n})$ pointers. In total these require space $O(\frac{m}{\log n})$ bits.
When the BWT is coded in this way we can start decoding it at any index
which is a multiple of the block size $d\in O(\log^2 n)$. The setup procedure
takes $O(1)$ time and we can decode each following symbol in time $O(1)$
when performing a sequential scan 
(see \cite{moffat1997implementation,DBLP:conf/dcc/KarkkainenT13} for constant time
encoding and decoding of Huffman/minimum redundancy
code. The code lengths of the $\gamma$ code can be decoded using a lookup table taking $o(n)$
bits).
Dense $\gamma$ coded gap arrays can be stored very similar to our BWT
storage scheme for allowing the start of a decoding run from any position in
the array with very little overhead. As shown above an array $G$ of length
$l$ such that $\sum_{i=0}^{l-1} G[i]=s$ can be stored using $O(s+l)$ bits,
which means the start index for the code of any position can be stored in
$O(\log (s+l))$ bits. We decompose the array into non-overlapping blocks of
length $e\in O(\log^2 n)$ elements and store a pointer to the start of each
block. These pointers take space $O(\frac{s+l}{\log n})$ bits in total and we
can start the decoding from any index on $G$ after a setup time in $O(\log^2 n)$.
Subsequently each following element can be sequentially decoded in constant time per element.
For the sparse $\gamma$ coded gap files we have already described the code
itself above. 
Let us assume we are considering a gap array $G$ of length $\ell$
such that the sum over the elements of $G$ is $s$ and $G$ contains $k$ non-zero values.
To facilitate the start of a decoding run at a given index on $G$ we keep
the $\gamma$ code as is, where we assume that the two $\gamma$ coded
sequences (distance between non-zero values in the array and non-zero values) are stored interleaved. 
As the array is sparse we possibly would be adding excessive amounts of space
by adding pointers marking the next non-zero value for equidistant indices
on $G$. 
Instead we store information for every $j=\lceil\log n\rceil^2$'th instance of a non-zero
value on $G$ for $i=0,1,\lfloor\frac{k}{j}\rfloor$. 
For each such instance of a non-zero value we store the bit position of the code pair in the
$\gamma$ code and the absolute index on $G$.
The $\gamma$ code itself requires $O(k(1+\log\frac{s\ell}{k^2}))$ bits as shown above, so each pointer into the code takes $O(\log (k \log n))$
bits and there are $O(\frac{k}{\log^2 n})$ such pointers,
so the space used for pointers is $O\left(\frac{k \log(k\log n)}{\log^2 n}\right)$ bits
which is bounded by the space used for the sparse array itself. 
Storing the selected absolute values of non-zero indices on $G$ takes space $O(\frac{k}{\log n})$.
Using this index on the external memory representation for $G$ we can start
decoding the array from any index on $i$ on $G$ by first employing a binary
search on the absolute value of the largest indexed non-zero value element
in $G$ with an index not greater than $i$. This element may be $O(\log^2 n)$
non-zero elements away from our desired target which need to be skipped
before we reach the relevant portion of $G$. Thus the setup time is
$O(\log^2 n)$. After the setup we can decode either the sequence of non-zero
values starting from $i$ or the sequence of all values starting from $i$
in constant time for each such value. Decoding non-zero values is required
for merging two sparse arrays while decoding all values is required for
merging two BWTs.
For parallel encoding of these data structures we note that all of them are
easily writable in any given number of parts to a set of files such that a
simple combination of these files represents the respective complete structure.
Thus encoding the data structures does not need be performed sequentially.
If we write data to multiple files, then we need a simple additional layer
of indexing meta data designating the length of the corresponding sequence
in each file. For $gt$ bit vector, Burrows Wheeler transform and dense
$\gamma$ coded gap files this additional data stores the number of bits,
symbols and gap array entries respectively in the form of a prefix sum
array. This array allows us to find the correct starting file and offset in
this file given the offset in the complete sequence in logarithmic time in
the number of files. The number of files for any one such sequence is bounded by
the number of processors, so the space for the prefix sum array is not a
concern as long as the number of processors $p$ is $O(\frac{n}{\log n})$.
For sparse $\gamma$ coded gap files the situation is slightly more involved,
as we want to be able to start processing from a given index $i$ on the gap file
to process the entry at index $i$ and all gap file entries following $i$
or given some $j$ skip $j$ non-zero values from the front of the file and
process all following non-zero elements. For this purpose we store two
index sequences in addition. The first is a prefix sum sequence over the
number of gap array elements represented by each file. Here the number
stored in a file is obtained as
\begin{itemize}
\item the length of the gap array if there is only one file or
\item otherwise, if there is more than one file
\begin{itemize}
\item the index of the last non-zero gap array value represented in the file
plus one for the first file,
\item the index of the last non-zero gap array value represented in the
current file minus the index of the last non-zero gap array value
represented in the previous file if the file is neither the first nor the
last and
\item the length of the full gap array minus the index of the last non-zero
gap array value represented in the penultimate file for the last file.
\end{itemize}
\end{itemize}
The second is the prefix sum sequence over the sequence of non-zero values
stored in each file. Both additional index sequences store $O(p)$ values.
Again the space for these can be neglected when $p\in O(\frac{n}{\log n})$.
Searching the file level indexes for the correct file and in file offset for
a given global offset in each case takes time $O(\log p)$ and is thus not
critical in comparison with the in file access times for BWT and gap files. We
will thus without loss of generality below discuss a situation were the 
respective streams are assumed to be given as a single file instead of a 
set of files for the sake of simplicity of exposition.
\subsection{Parallel Construction of Rank Indexes}
There are several data structures which allow constant time rank
queries on sequences over a finite alphabet. In this paper we will consider
the wavelet tree (see \cite{DBLP:conf/soda/GrossiGV03}). 
A prefix free code $g$ for an alphabet $\Gamma$ is a function
$g:\Gamma\mapsto\{0,1\}^\ast$ such that for all pairs of symbols
$(a,b)\in\Gamma^2$ such that $a\ne b$ the code $g(a)$ is not a prefix of $g(b)$.
In the following we in addition assume that each prefix free code considered is such
that each binary string $c$ either has a prefix which is a code of some
symbol or $c$ is a prefix of at least one code produced by the code. 
This condition implies that there are no unused binary code words. 
In consequence the longest code assigned to any symbol is bounded by the
size of the input alphabet.
For any prefix free code $g:\Gamma\mapsto\{0,1\}^\ast$ where $|\Gamma|>1$
let $g_i:\Gamma_i\mapsto\{0,1\}^\ast$ denote the sub
code obtained from $g$ by choosing $\Gamma_i=\{a\in\Gamma \mid g(a)[0] = i \}$
and $g_i(a)=g(a)[1,|g(a)-1|]$. Informally $g_i$ is defined for all symbols
in $\Gamma$ for which the first bit of the code assigned by $g$ is $i$ and
this first bit is stripped off when transforming $g$ to $g_i$.
Let $f_i:\Gamma^\ast\mapsto\Gamma_i^\ast$ be the function given by the
homomorphism defined by mapping each symbol in $\Gamma_i$ to itself and all
other symbols to the empty word, i.e.~the function removing all symbols
which are not in $\Gamma_i$.
Let $h:\Gamma\mapsto \{0,1\}^\ast$ denote a prefix free code.
Then the wavelet tree concerning $h$ for a text $s\in\Gamma^\ast$ is obtained in
the following recursive way.
\begin{itemize}
\item If $|\Gamma|=1$ then produce a leaf.
\item Otherwise store the bit sequence $h(s_0)[0]h(s_1)[0]\ldots h(s_{|s|-1})[0]$
in an inner node. Attach a left and right child to this inner node where the
left child is the wavelet tree for $f_0(s)$ concerning the code $h_0$ and
the right child is the wavelet tree for $f_1(s)$ concerning the code $h_1$.
\end{itemize}
When this procedure is complete, then we assign a binary code to each inner
node, where we assign $\epsilon$ to the root, $c0$ to each inner node which
is a left child of a parent with code $c$ and $c1$ to each inner node which
is a right child of a parent with code $c$.
For the parallel construction of a wavelet tree based on a prefix free code
$h$ from an input string observe
that each interval of indices on the input string corresponds to a unique
interval of indices in the bit vector stored in any inner node of the
wavelet tree. For constructing a wavelet tree in parallel from an input
string $s$ of length $m$ over $\Sigma$ using $p$ processors we use the following steps.
\begin{enumerate}
\item Choose a block size $d=\lfloor\frac{m+p-1}{p}\rfloor$.
\item Assign index block $I_i=[di,\min(di+d,m))$ to processor $i$.
\item Each processor computes a symbol histogram $H_i:\Sigma\mapsto\mathbb{N}$
for its interval such that $H_i[a]=|\{j\mid j\in I_i\textnormal{ and } s[j]=a\}|$.
\item Compute the prefix sum arrays $\hat{H}_i$ defined by $\hat{H}_i[a] = \sum_{j=0}^{i-1} H_i[a]$ for all $a\in\Sigma$. This can be parallelised along the alphabet $\Sigma$.
\item Let $C(h)=\{h(a) \mid a\in\Sigma\}$ and let $\hat{C}(h) = \{ c' \mid c' = c[0,i]
\textnormal{ for some } c\in C(h) \textnormal{ and } -1\leq i < |c|-1 \}$ (the set of all proper prefixes of codes produced by $h$).
Let $P_i:\hat{C}\mapsto \mathbb{N}$ be defined by $P_i(j) = \sum_{k \mid j \textnormal{ prefix of h(k)}} \hat{H}(k)$.
Processor $i$ computes table $P_i$. Note that $P$ may be understood as a
function from integers to integers instead of binary strings to integers, as
the elements of $\hat{C}$ may be interpreted as the numerical value
represented by the respective binary strings. $P_i$ denotes the bit offset
of the input interval for input block $i$ for each inner node of the
constructed wavelet tree.
\item Processor $i$ produces the wavelet tree for interval $I_i$ while
instead of starting at the beginning of the bit vector of a node starting at
the bit position designated by the array $P_i$.
\item For each node of the wavelet tree the $p$ processors compute the
necessary prefix sums for the binary rank dictionaries of each node in parallel.
\end{enumerate}
This approach can be used to obtain a rank dictionary for the input string
$s$ of length $m$ over $\Sigma$ in time $O\left(\frac{m}{p}\right)$ using $p$ processors.
This applies for block type as well as Huffman code.
\subsection{Parallel Computation of Gap Arrays}
When we compute a gap array we are generally merging two adjacent text blocks.
Let the left and right block sizes be $b_l$ and $b_r$ respectively and let
the indices of the left block on $\tilde{t}$ start at index $i$. In the
serial version we start with the suffix at index $i+b_l+b_r-1$ and perform
$b_r-1$ steps of backward search using the index for the left block. This
can be parallelised by performing several backward searches in parallel
starting from several indices within the right block. If $p\leq b_r$ processors are
available, then we can choose the starting positions
$x_j = i+b_l+b_r-1-j\lceil\frac{b_r}{p}\rceil$ for $j=0,1,\ldots,p-1$ and
perform $\ell_j=x_{j+1}-x_j$ steps for $j=0,1,p-2$ and $\ell_j=b_r-x_j$
steps for $j=p-1$. As the backward search is based an suffix ranks in the
left block we need to find the ranks of these $p$ suffixes in the left
block using forward search on one or more suffix arrays just as we did above 
for the single suffix at the end of the right block in serial mode. 
As we have $p$ processors and need to find $p$ ranks, the run-time
stays the same as in the serial version for both average and worst case.
In the average case the forward searches are not a time critical factor as long as 
$p$ is $O(\frac{n}{\log^2 n\log\log n})$, so 
in practice we can expect the forward searches to be of no concern for the
execution time. When we perform $p$ backward searches in parallel, then we
also need to be prepared to perform the corresponding increment operations
on the gap array in parallel. 
This is no problem if the gap array can be
kept in internal memory as modern processors have specialised
operations for atomic, lock free modifications of single memory cells. If we
use sparse gap arrays in external memory, then the situation is somewhat
more complicated. In the serial version we first accumulate a number of
indices in the gap array whose value is to be incremented in a list of
length $k\in O(\frac{b_r}{\log^2 b_r})$ in internal memory.
If this list runs full, then it needs to be sorted and written to external memory as a sparse
$\gamma$ coded gap array. Whenever there are two sparse gap arrays of the
same sum, then the two need to be merged to obtain a single sparse array of
twice that sum (as the number of non-zero elements in the files approaches
the size of the left block we switch to dense arrays as described above.
Generating and merging dense $\gamma$ coded gap arrays are however
simplified versions of the equivalent problems in sparse arrays, so we
will not explicitly discuss this below as the algorithms are easily deduced
from the sparse case).
In the parallel case we keep the concept of a single array
accumulating indices in internal memory. All $p$ processors are appending
values to this array until it runs full. This can be implemented by using a
fill pointer which is modified by atomic increments while elements are
inserted at the array index designated by the fill pointer before the increment.
A process stalls backward search when the pointer exceeds a given threshold
signalling that the array is full. When all processes have detected that the
array is full, then we can perform a parallel radix sort on the array. As in
the serial case this is performed in two stages, i.e. using $e\in O(\sqrt{n})$
buckets. Like for histogram computation in the parallel wavelet tree
construction we partition the array into $p$ non overlapping intervals and
compute a separate histogram for each interval. This takes space
$O(p\sqrt{n}\log n)$ bits which is negligible as long as $p\in o(\frac{\sqrt{n}}{\log n})$
which is not an issue in practice. 
Prefix sums are computed based on these bucket histograms
to obtain the starting index for the elements for each
bucket originating from each processor input interval. The elements for
bucket $i$ appear ahead of those for buckets with index larger than $i$ and
within each bucket the elements originating from processor interval $j$
appear before those for intervals with higher indices. These prefix sums can
be computed in parallel by first accumulating all element counts over
$\lceil\frac{e}{p}\rceil$ buckets and then based on these offsets filling in
the final prefix sums in parallel. Using the computed prefix sums the
numbers can be sorted into their buckets in parallel while keeping the
stable order necessary for a correct radix sort. Using this radix sort
method we can sort a set of the $k$ numbers from $[0,b_l]$ in time $O\left(\frac{\max\{\sqrt{b_l},k\}}{p}\right)$
using $p$ processors. After the array is sorted we again partition it into
$p$ index intervals. In each interval in parallel we transform the now
sorted sequence into run length encoding, i.e.~pairs of numbers and the
number of occurrences of the number in the interval. This representation does
not asymptotically need more space than the sorted number sequence. The
computation of the run length encoding in intervals produces a correct run
length encoding of the complete sequence except for the borders of the
intervals. Here the last run length in one interval may be for the same
number as the first run length in the next interval. This however can be
corrected in time $O(\log p)$ using $p$ processors, where we assign the
complete run-length to the leftmost occurrence of the number and mark the
other occurrences as invalid. After this we can produce a compact version
without space between the intervals of the run length encoded array in 
internal memory in time $O(\frac{k}{p})$ by counting the number of runs per 
interval, computing the prefix sums over this number sequence and copying
the sequence to another array in compact form. Finally we decompose this
compact run length array into at most $p$ intervals of length
$\lceil\frac{\lceil\frac{k}{p}\rceil}{e}\rceil e$ where $e\in O(\log^2 n)$
is a positive integer and write the resulting decomposition to $p$ sparse $\gamma$ coded gap 
arrays such that the concatenation of the files represents the complete
sparse gap array. As each interval size but the last is a multiple of
$e$ each file starts on an index of the sequence of run
length pairs which is also a multiple of $e$. This file
production also takes time $O(\frac{k}{p})$. Finally we can also produce the
block pointers required for accessing the file without performing a linear
scan in the same time complexity. Summarising we can transform the
array of length $k$ accumulating indices for incrementing in the gap array to a sparse
$\gamma$ coded representation in external memory in time $O(\frac{k+\sqrt{b_r}}{p})$ using $p$ processors.
Now assume we have two sparse $\gamma$ coded gap arrays with $k_a$ and $k_b$
non-zero values respectively which are to be merged. We want to perform
this merging using $p$ processors, where in the optimal case each processor
is assigned the same amount of work. 
Both files store $e\in O(\log^2 n)$ non-zero values per block. 
Using a binary search on these values we can determine the smallest $i$ such that at least $\lceil\frac{k_a+k_b}{p}\rceil$
blocks are stored in the two files for values smaller than $i$.
The first processor merges the values of the two input sequences from the beginning up to value $i$.
The same scheme is used to split the remaining data into a work package for the second processor and the rest for the remaining processors etc. 
The splitting operations take time $O( p \log^3 n )$. 
We are computing $p$ splitting points, each such splitting point is computed
using a binary search on $O(n)$ sorted values and accessing each such value
takes time $O(\log^2 n)$.
After the splitting the merging is easily done in
parallel in time $O(\frac{k_a+k_b}{p})$ so the run time of the method is
$O(p \log^3 n + \frac{k_a+k_b}{p})$.
This is a convenient procedure when the
number of processors is low compared to the size of the input. If the number
of processors should not be neglectable in comparison to the input size,
i.e.~if $O(p\log^3 n)$ would be large, then we could employ a different
splitting method where we would decompose the merging work along an optimally
balanced binary tree and thus reduce the work for finding the splitting points
to $O(\log p \log^3 n)$. We would still be computing $O(p)$ splitting
points, but the splitting point search within each level of the tree of
depth $O(\log p)$ could be performed in parallel in time $O(\log^3 n)$.
\subsection{Merging Burrows Wheeler Transforms}
For merging two BWT sequences of length $b_l$ and $b_r$ respectively 
let us assume we have the two sequences in external memory and the gap 
array in internal or external memory. For parallel merging of the two
sequences we first compute the sum over the gap array for each interval 
of integer length $e\in O(\log^2 n)$ starting at index $ie$ on the gap array $G$
for $i=0,1,\ldots,\frac{b_l}{e}$ using $p$ processors in parallel in time
$O(\frac{b_l}{p})$. These can be produced sequentially and in external
memory and take space $o(b_l)$ bits. Then we compute the prefix sums over
this sequence in parallel in time $O(\frac{b_l}{p})$ and also store it in
external memory. Using these prefix sums we can assign approximately equal
work loads to each processor. Using a binary search on the prefix sums we
can search for the smallest $j<b_l+1$ such that $s = \sum_{i=0}^{j} (G[i]+1) \geq \lceil\frac{b_l+b_r}{p}\rceil$
and then increase $j$ and update the sum $s$ accordingly until 
$\lceil\frac{s}{e}\rceil e - s \leq G[j+1]$ or $j=b_l+1$ if no such $j$ exists (remember that $G$ has length $b_l+1$).
The binary search gives us a work package which is approximately $\frac{1}{p}$
of the total length. The subsequent increasing of $j$ is for allowing us to produce
a file with a length which is a multiple of the block size by just
considering the next gap array value $G[j+1]$. The case where there is no
such $j$ can only occur for very sparse arrays with high run length values.
The first processor merges the front parts of the two BWTs by in turn taking
$G[i]$ symbols from the second BWT stream and one symbol from the first one
for $i=0,1,\ldots,j$ and finally $\lceil\frac{s}{e}\rceil e - s$ from the
second BWT stream to obtain a multiple of the block size $e$. The rest of
the two BWT streams is merged by the other processors. The further splitting
of the remaining data into blocks is performed just like the splitting of
the data into the part for the first processor and the rest. The only
difference consists in the fact that a fraction of $G[j+1]$ is already
handled by the first processor and thus needs to be excluded from the
remaining process. As for merging sparse $\gamma$ coded gap arrays the
finding of the splitting points in this manner takes time $O(p \log^3 n)$
following a similar argumentation. The actual merging is then performed in time
$O(\frac{b_l+b_r}{p})$ using $p$ processors. In practice it may be desirable
to perform the merging directly using the run length encoding instead of
symbol per symbol. In the asymptotical worst case there is however no
difference in the run time. Like in the case of merging sparse $\gamma$
coded gap files we can speed up the computation of the splitting points to
$O(\log p\log^3 n)$ if necessary. The approach of parallelisation of BWT
merging can be extended to multi way instead of two way merging as described
in \cite{paper9:ICABD2014}. The concepts remain the same, but the searching
of split points is over all of the involved prefix sum arrays of gap files instead of
just one. The merging within the single merge packets remains the same as in
the serial version.
\subsection{Parallel Sorting of Single Blocks}
We need to suffix sort single blocks and produce the respective block BWTs
before we can merge these structures to obtain the final BWT of the complete
text. To our best knowledge there is currently no parallel in place suffix
sorting algorithm. The external memory algorithmic variants described in
\cite{Karkkainen:2006:LWS:1217856.1217858} either distribute a total work of $\Omega(n\log n)$
over $p$ processors or work in some randomised model of computation.
As the problem of parallel suffix sorting in place is still open, we
resort to using our parallel block merging scheme to achieve some
parallelism for the average case. In the highly unlikely worst case this will again not
bring any improvement over the serial version. The block size $b$ in the
serial version is chosen to allow the suffix sorting of a single block in
internal memory. For the parallel case we deduce a reduced block size $b_p = \lceil\frac{b}{p}\rceil$
which allows us to perform the suffix sorting of $p$ such blocks at the same
time in internal memory. After these blocks have been handled we merge
the resulting BWTs until we obtain the BWTs for block size $b$ using our
parallel merging approach described above. Note that this does not require
us to store gap arrays in external memory, as we have sufficient space in
internal memory to perform the merging. In the unlikely worst case we will
spend time $\Omega(n\log n\log p)$ for finding the start ranks of our
backward searches during gap array construction. On average the run time is
$O(\frac{b\log p}{p})$ in all practically relevant cases (i.e.~when $p\log p\in O(\frac{n}{\log^4 n})$).
\subsection{Parallelisation of the Complete Algorithm}
Using the building blocks described above we can now present the parallel
version of our algorithm for $p$ processors. The steps are:
\begin{enumerate}
\item Choose a target block size $b''\in O(\frac{n}{\log n})$ as in the
serial version. Transform this block size $b''$ to
$b'=\lceil\frac{b''}{p}\rceil$, a target block size for parallel computing
and finally deduce a block size $b=\lceil\frac{n}{\lceil\frac{n}{b'}\rceil}\rceil$.
Split the text $t$ into blocks of size $b$ and $b-1$ only such that all the
blocks of length $b$ appear before the ones for $b-1$. This step takes
constant time and space.
\item Compute which blocks in $\tilde{t}$ propagate repetitions of a period at most
$b$ and for each block which is followed by a block propagating a repetition
whether it is generating this repetition. We can handle $p$ blocks in
parallel during the propagation as well as the generation checking phase as
each block can be handled independently and we have sufficient space in
internal memory due to the reduced block size compared to the serial
version. This step consequently takes time $O(\frac{n}{p})$ in total and
space $O(p\frac{n\log\sigma}{\log n p})=O(\frac{n\log\sigma}{\log n})\subseteq o(n)$ bits. The result can be stored in external memory.
\item Compute a balanced merge tree for the blocks in serial like for the
serial cases. This takes time $O(\log n)$ and space $o(n)$ bits.
\item Sort the blocks and store the resulting BWT, $gt$ and sampled inverse
suffix arrays in external memory while processing $p$ blocks at a time.
Using the suffix and LCP arrays of the basic blocks also compute the start
ranks necessary for the backward searches when merging blocks together. This
step takes time $O(n\log n (\log\log n+\log p))$ in the worst case and $O(\frac{n}{p})$
on average and space $O(b'\log b')=O(n)$ bits of internal memory.
\item Process the merge tree. Mark all leafs as finished and all inner nodes
as unfinished. While there are unfinished nodes choose any unfinished node
with only finished children, merge the respective blocks and mark the node
as finished. When merging blocks use an internal memory gap array if the node has a
distance of at most $\lceil\log p\rceil$ from any leaf in the underlying sub tree
and sparse $\gamma$ coded gap arrays in external memory otherwise. Use the
parallel versions of gap array computation, merging of $\gamma$
coded gap array files, merging of BWT streams and merging sampled inverse
suffix array files. The $\log\log n$ levels closest to the root can be
processed in total time $O(\frac{n\log\log n}{p})$ per level or $O(\frac{n\log^2\log
n}{p})$ in total. The $\log p$ other (lowest) levels of the tree can be
processed in time $O(\frac{n\log p}{p})$. The total run-time for the merging
stage is thus $O(\frac{n}{p}\max\{\log^2 \log n, \log p\})$. The space
requirement in internal memory remains asymptotically the same as in the serial
version as long as $p\in o\left(\frac{\sqrt{n}}{\log n}\right)$ as described above.
\end{enumerate}
The total run-time of the parallel algorithm using $p$ processors is thus
$O(\frac{n}{p}\max\{\log^2 \log n, \log p\})$ on average and with high
probability and $O(n\log n (\log\log n+\log p))$ in the worst case. As in
the serial case we can resort to the original skewed merging approach on top
of the balanced merging if there is not sufficient internal memory to hold the index for the left half 
of the text in memory but sufficient to keep an index for a fixed fraction
of the text while keeping the same run-time and space bounds.
\section{Conclusion}
\label{conclusion}
We have presented a new semi external algorithm for computing the Burrows-Wheeler transform
designed for the case where we can keep a fixed fraction of the input text in internal memory.
On average our new algorithm runs in time $O(n\log^2\log n)$ and is faster then the
algorithm of Ferragina et al published in \cite{DBLP:journals/algorithmica/FerraginaGM12}
while in the worst case it is only slower by a factor of $O(\log\log n)$.
In comparison with the algorithm by Beller et in \cite{DBLP:conf/spire/BellerZGO13}
our algorithm can be applied for the case when less than $8$ bits per symbol
of internal memory are available. We have also presented a parallel version
of our algorithm which on average and with high probability runs in time
$O(\frac{n}{p}(\log^2\log n + \log p))$ on $p$ processors and should achieve
this run time in all practically relevant applications. In the worst case the run
time of the parallel algorithm is $O(n\log n (\log\log n+\log p))$ and thus,
depending on the number of processors $p$, possibly even slower than the
serial version. However this worst case is highly unlikely in practice.
The practically interested reader can find a partial implementation of the
algorithms presented in this paper at https://github.com/gt1/bwtb3m.
The current version of the code does not implement the repetition reduction
as described in Section \ref{singleblocksorting} but uses a simplified
approach computing the length of the longest common prefix of all suffixes
in a block with the first suffix of the following block for determining how
far a block needs to be extended to obtain a correct sorting order. On
average this extension is by $O(\log n)$ characters. For texts featuring
long repetitions of short periods (shorter than the block size) however the method 
would be problematic.
One major open problem in our approach is the worst case run time. While it
is not as extreme as the worst case of some other algorithms for suffix sorting or BWT
construction showing good average case behaviour (cf.~\cite{Puglisi:2007:TSA:1242471.1242472}), 
it still leaves room for improvement. This is particularly true for the case of the parallel variant of the algorithm.
The increased worst case run-time over the average case stems solely from
the forward searches on block suffix arrays to obtain the starting points for the
backwards searches during gap array construction. In the serial version a
string needs to feature a very large LCP value in $\Omega(\frac{n\log\log n}{\log n})$
between two of its suffixes to trigger the worst case behaviour, in the parallel case using $p$
processors a LCP value in $\Omega(\frac{n\log\log n}{p \log n })$ is sufficient.
So an efficient method for detecting such high LCP values using small space
would be necessary to alleviate the effects of the problem. Note that these
high LCP values are not necessarily caused by runs (cf.~\cite{KolpakovKucherovFOCS99})
as the positions of the suffixes in the string sharing a long common prefix
may be too far apart to allow the common strings to connect, i.e.~build a square.
Two other open problems are the actual practical space usage of our algorithm in
external memory and its I/O complexity. 
While the space usage in external memory is asymptotically clear with 
$O(n\log\sigma)$ bits the involved constants are important in practice. Punctual observations in our
implementation studies suggest that the involved constants are low, but a
systematic analysis is desirable. A very simple analysis in analogy to the
space usage shows that the algorithm uses $O(n \log^2 \log n)$ bits of I/O
for merging $\gamma$ coded gap arrays, $O(n \log \sigma \log \log n)$
bits of I/O for merging BWTs and $O(n \log\log n)$ bits of I/O for writing
and reading $gt$ arrays, where I/O is in all cases streaming. As for the
space usage in external memory, it would also be interesting to determine
the involved constants in future work.

\bibliographystyle{abbrv}
\tiny
\bibliography{main}

\normalsize
\newpage
\appendix

\end{document}